\documentclass[11pt,a4paper,english]{article}

\title{Remarks on QFT in the Coordinate Space}
\author{A. Much\\ \footnotesize{Instituto de Ciencias Nucleares, UNAM, M\'exico D.F. 04510, M\'exico},
	\\ \footnotesize{Faculty of Mathematics, University of Vienna, 1090 Vienna, Austria}}

\usepackage{float}            
\usepackage[T1]{fontenc}   
\usepackage[cmex10]{amsmath}  
\usepackage{amstext}          
\usepackage{mathrsfs}
\usepackage{amsfonts}         
\usepackage{amssymb}          
\usepackage{amsthm}    
\usepackage{setspace}
\usepackage{bm}               
\usepackage{enumerate}        
\usepackage[bottom]{footmisc} 
\usepackage{array}            
\usepackage{textcomp}         
\usepackage{pdfpages}         
\usepackage{parskip}          
\usepackage[right]{eurosym}   
\usepackage{xcolor}           
\usepackage[square,numbers]{natbib}	  

\usepackage{amsmath}  
\usepackage[colorlinks,pdfpagelabels,pdfstartview = FitH,bookmarksopen = true,bookmarksnumbered =
true,linkcolor = black,plainpages = false,hypertexnames = false,citecolor = black]{hyperref}

\sloppy                       

%
%
\setlength{\unitlength}{1cm}
\setlength{\oddsidemargin}{0cm}
\setlength{\evensidemargin}{0cm}
\setlength{\textwidth}{15cm}
\setlength{\topmargin}{-1cm}
\setlength{\textheight}{23cm}
\columnsep 0.5cm
%

\newtheorem{theorem}{\textsc{Theorem}}[section]
\newtheorem{lemma}{\textsc{Lemma}}[section]
\newtheorem{proposition}{\textsc{Proposition}}[section]

\theoremstyle{definition}
\newtheorem{definition}{\textsc{Definition}}[section]

\newtheorem{convention}{Conventions}[section]

\theoremstyle{remark}
\newtheorem{remark}{Remark}[section]

\newcommand{\R}{\mathbb{R}}


\usepackage{fancyhdr}
\pagestyle{fancy}
\fancyhf{}
\fancyhead[L]{\rightmark}
\fancyhead[R]{\thepage}

\numberwithin{equation}{section} 
\begin{document}
	\maketitle
	\abstract{Generators of the Poincar\'e  group, for a free massive scalar field, are usually expressed  in  the momentum space.  In this  work we perform a transformation of these generators	into the coordinate space. This (spatial)-position space is spanned by  eigenvectors of the Newton-Wigner-Pryce operator. The motivation  is twofold. First, we want to investigate the  localization of a relativistic  particle. Furthermore, we need a deeper understanding of the commutative spatial coordinate space in QFT, in order to investigate the non-commutative version thereof.} 
	\tableofcontents
	
	\section{Introduction}
	Form an algebraic point of view,  standard problems in  quantum mechanics can be  described in two spaces,
	the momentum and the coordinate space. These  spaces are generated  by two complimentary observables,
	which are each represented by    self-adjoint operators. These  operators  are the momentum and the coordinate operator. In particular, the  spaces they generate  consist of their respective eigenvectors. Both spaces are equally relevant for investigation and the choice usually depends on the nature of the problem. 
	\\\\Nevertheless, due to the  complementarity,  it  is   difficult, if not impossible,  to understand quantum mechanics from both perspectives at the  \textbf{same time}. In a letter from   Pauli to Heisenberg (19. October 1926), there is the famous quote describing this particular  problem:
	\textit{"One can see the world with the p-eye or one can
		see the world with the x-eye, if one opens both eyes
		at once, one becomes crazy." }
	\\\\ However, in quantum field theory (QFT) on Minkowski space  the world   is mostly seen  through   the eye of  momentum space. This is partly due to the Wigner classification in terms of   unitary,   irreducible representations   of the Poincar\'e group.  Hence, most work is done in the momentum basis and the conjugate basis of the momentum, i.e. the spatial coordinate space is, for the most part, overlooked. \\\\
	It should be noted that  there exists an appropriate operator,  whose eigenstates span the spatial position space. It is in literature mostly referred to as the Newton-Wigner-Pryce (NW)-operator, see  \cite{PR48}, \cite{NW49},  \cite{Sch}, \cite{J80}, \cite{SS} and \cite{Muc3}. The importance of these states relies on the fact that they can be used to calculate the probability amplitude  of finding a particle at a certain spatial-position at time $t$. Moreover, the second quantization of this operator allows us  to calculate the probability amplitude  of finding $k$-particles at   certain spatial-positions at time $t$.
	\\\\The  motivation  to perform investigations  in the coordinate space, comes from the desire to understand and investigate the non-commutative version thereof.  Hence, concerning investigations in QFT that assume  a constant non-commutative space (without the involvement of time) the coordinate space is of great importance. In particular, it is the space where non-commutativity is introduced. Therefore, in order to understand how the  Poincar\'e group  acts or which form the generators take on   non-commutative spaces, in a quantum field theoretical context, we first need to investigate the commutative framework.
	\\\\
	Therefore, in this work we open the other eye, i.e. take a  first step in this direction by expressing  second-quantized quantities of the  Poincar\'e algebra in the coordinate basis. In particular, we express  generators of the respective algebras in terms of Fock space operators, which are usually written in the momentum space, and change the basis  to the coordinate space. This is done by using the eigenvectors and corresponding ladder operators of the spatial coordinate operator. Moreover, we investigate how the coordinate space behaves under transformations of the Poincar\'e group. 
	\\\\ 
	It is important to point out that there is a clear distinction between the so-called configuration space (see \cite[Chapter 7]{Sch}) and the complimentary space of the momentum, i.e. the coordinate space. In QM these two spaces are equal, however in the QFT-case these two cases are different, but related by an integral transformation. 
	\\\\
	The paper is organized as follows; Section two comprises the  preliminaries, where we define the Fock space of the free scalar field and  the algebra of the Fourier transformed ladder operators. The third section gives a treatment of the  Newton-Wigner-Pryce operator, i.e. the respective spatial-position coordinate operator.  In section four we transform the Poincar\'e algebra of a massive scalar field into the coordinate space. 
	Section five investigates the transformation behavior of the coordinate space under the Poincar\'e group. 
	\begin{convention}
		We use $d=n+1$, for $n\in\mathbb{N}$ and the Greek letters are split into  $\mu, \,\nu=0,\dots,n$. Moreover, we use Latin letters for the spatial components which run from $1,\dots,n$ and we choose the following convention for the Minkowski scalar product of $d $-dimensional vectors, $a\cdot b=a_0b^0+a_kb^k=a_0b^0- \vec{a}\cdot\vec{b}$.
	\end{convention}
	\newpage
	\section{Preliminaries}

	\subsection{Bosonic Fock Space and Fourier-Transformation} The ($n$+$1$)-dimensional ($n$$\in$$\mathbb{N}$) relativistic  Bosonic Fock 
	space is defined in the following. Let a particle have momentum $\mathbf{p} \in \mathbb{R}^n$. Then  the energy of a massive  particle is  $\omega_{\mathbf{p}}=+\sqrt{\mathbf{p}^2+m^2}$. In addition, the Lorentz-invariant measure is given by   $d^n\mu(\mathbf{p} )=d^n\mathbf{p}/( {2\omega_{\mathbf{p}}})$.
	\begin{definition}\label{bf}
		The \textbf{Bosonic Fock space} $\mathscr{F^{+}({H})}$ is defined 
		as in \cite{BR}:
		\begin{equation*}
		\mathscr{F^{+}({H})}=\bigoplus_{k=0}^{\infty}\mathscr{H}_{k}^{+},
		\end{equation*}
		where $\mathscr{H}_{0}=\mathbb{C}$ and  the symmetric $k$-particle subspaces are given as
		\begin{align*}
		\mathscr{H}_{k}^{+}&=\{\Psi_{k}: \underbrace{  \mathscr{H}_{1}  \times  \dots \times 
			\mathscr{H}_{1}}_{k-times} \rightarrow \mathbb{C}\quad \mathrm{symmetric}
		|\\ &\left\Vert  \Psi_k \right\Vert^2 =\int 
		d^n\mu(\mathbf{p}_1)\dots\int d^n\mu(\mathbf{p}_k)
		|\Psi_{k}(\mathbf{p}_1,\dots,\mathbf{p}_k)|^2<\infty\},
		\end{align*}
		with $\mathscr{H}_{1}$ being defined by 
		\begin{align*}	\mathscr{H}_{1} &=\{\Psi :   	H_{m}^{+} \rightarrow \mathbb{C} 
		| \left\Vert  \Psi  \right\Vert^2 =\int 
		d^n\mu(\mathbf{p} ) 
		|\Psi (\mathbf{p} )|^2<\infty\},
		\end{align*}
		where $	H_{m}^{+}$ is the orbit 
		\begin{align*}
		H_{m}^{+}&: =\{p\in \mathbb{R}^{n+1}|p^2=m^2,p_0>0\}.
		\end{align*} 
	\end{definition}$\,$\\
The ladder operators $a,a^{*}$ for  this particular space are defined as follows. \newline $\,$
	\begin{definition} 
		The covariant \textbf{particle annihilation} and \textbf{creation operators} are defined by their 
		action on $k$-particle wave functions,
		\begin{align*}
		(a_c(f)\Psi)_k(\mathbf{p}_1,\dots,\mathbf{p}_k)&=\sqrt{k+1}\int 
		d^n\mu(\mathbf{p})\overline{f(\mathbf{p})}
		\Psi_{k+1}(\mathbf{p},\mathbf{p}_1,\dots,\mathbf{p}_k)\\
		(a_c(f)^{*}\Psi)_k( \mathbf{p },\mathbf{p}_1,\dots,\mathbf{p}_k)&= \left\{
		\begin{array} {cc}
		0, \qquad &k=0 \\ \frac{1}{\sqrt{k}}\sum\limits_{i=1}^{k} f(\mathbf{p}_i)
		\Psi_{k-1}(\mathbf{p}_1,\dots,\mathbf{p}_{i-1},\mathbf{p}_{i+1},\dots,\mathbf{p}_k),\quad 
		&k>0
		\end{array} \right.
		\end{align*}
		with $f \in \mathscr{H}_{1} $ and $\Psi_k \in \mathscr{H}_{k}^{+}$ . The commutator
		relations of $a_c(f), 
		a_c(f)^{*}$ follow immediately and are given as  
		\begin{align*}
		[a_c(f), a_c(g)^{*}]=(f,g)=\int d^n\mu(\mathbf{p}) \overline{f(\mathbf{p})} 
		g(\mathbf{p}), \qquad
		[a_c(f), a_c(g)]=0=[a_c(f)^{*}, a_c(g)^{*}].
		\end{align*}
		Particle annihilation and creation operators with sharp momentum are 
		introduced as operator valued distributions and are given by
		\begin{align*}
		a_c(f)=\int d^n\mu(\mathbf{p}) \overline{f(\mathbf{p})}a_c(\mathbf{p}), 
		\qquad a_c(f)^{*}=\int d^n\mu(\mathbf{p}) {f(\mathbf{p})}a_c^{*}(\mathbf{p}),
		\end{align*}
		where the particle annihilation and creation operators with sharp 
		momentum satisfy the following commutator relations
		\begin{align}\label{pccr}
		[a_c(\mathbf{p}), a_c^{*}(\mathbf{q})]=2\omega_{\mathbf{p} 
		}\delta^n(\mathbf{p}-\mathbf{q}), \qquad
		[a_c(\mathbf{p}), a_c(\mathbf{q})]=0=[a_c^{*}(\mathbf{p}), a_c^{*}(\mathbf{q})].
		\end{align}
	\end{definition}
	From a book keeping point of view, the noncovariant  representation of the  annihilation and creation operators given as,
	\begin{equation*} 
	{a}(\textbf{p}):= \frac{{a}_c (\textbf{p})}{\sqrt{2\omega_{\mathbf{p}}}},\qquad  {a}^* (\textbf{p}):= \frac{{a}_c^* (\textbf{p})}{\sqrt{2\omega_{\mathbf{p}}}},
	\end{equation*}
	is easier to handle. Hence, in the following sections, when explicit calculations are performed, we     use the non-covariant representation. In order to give the base change of  the infinitesimal generators of the   Poincar\'e  group, in terms of the Fourier-transformed creation and annihilation operators, we first explicitly define the base change.
	\begin{definition}\label{sec2.2}\textbf{Fouriertransformation}\\\\
		The base change to the Fourier-transformed creation and annihilation operators is given by,
		\begin{align*}
		{a}(\mathbf{p})=(2\pi)^{-n/2} \int
		d^n \mathbf{x}\, e^{ip_kx^k} \tilde{a}(\mathbf{x}),\qquad {a}^{*}(\mathbf{p})=(2\pi)^{-n/2} \int
		d^n \mathbf{x}\, e^{-ip_kx^k} \tilde{a}^*(\mathbf{x}).
		\end{align*}
		
		From the commutation relation between the two operators $a$ and $a^{*}$ we can deduce   the relations for the Fourier-transformed operators, 
		\begin{align*}
		\delta^n(\mathbf{p}-\mathbf{q})=[{a}(\mathbf{p}),{a}^*(\mathbf{q})]=(2\pi)^{-n} 
		\iint
		d^n \mathbf{x}\,d^n \mathbf{y}\, e^{ip_kx^k}\, e^{-iq_ky^k} [\tilde{a}(\mathbf{x}),\tilde{a}^*(\mathbf{y})].
		\end{align*}
		Hence the solution of the commutation relations of the coordinate space creation and annihilation operators is given, as follows
		\begin{align*}
		[\tilde{a}(\mathbf{x}),\tilde{a}^*(\mathbf{y})]=\delta^n(\mathbf{x}-\mathbf{y}).
		\end{align*}
		The inverse transformations are given by,
		\begin{align}\label{inf}
		\tilde{a}(\mathbf{x})=(2\pi)^{-n/2} \int
		d^n \mathbf{p}\, e^{-ip_kx^k}{a}(\mathbf{p}),\qquad 
		\tilde{a}^*(\mathbf{x})=(2\pi)^{-n/2} \int
		d^n \mathbf{p}\, e^{ ip_kx^k}{a}^*(\mathbf{p}).
		\end{align}

	\end{definition}
	This commutator could also been have obtained by taking the CR's (commutator relations) of the smeared operators into account. Moreover, on   eigenvectors of the coordinate operator, the Fourier-transformed creation and annihilation operators act $\tilde{a},\,\tilde{a}^*$ as ladder operators. Note that the definition of the Fourier-transformed operators does not explicitly  depend on the mass.

	\subsection{Constantly Used Integrals}
	In order to make this work self-contained we give the general formula for certain Fourier-transformed functions,  \cite[Chapter III, Section  2.8]{GS1}
	\begin{align}\label{f2}
	\tilde{P}^{\lambda}&= \int\,d^n\mathbf{p}\, \left(\vert\vec{p}\vert^2+m^2\right)^{\lambda} \exp({-i\,p_k z^k})\\\nonumber
	&= \frac{2^{ \lambda +1}(2\pi)^{\frac{1}{2}n}}{\Gamma(-\lambda)} \left(\frac{m}{\vert \vec{z}\vert}\right)^{\frac{1}{2}n+\lambda}K_{\frac{1}{2}n+\lambda}(m\vert \vec{z}\vert ),
	\end{align}
	where $K$ denotes the modified Bessel-functions of second order and $\Gamma$ is the Gamma function. 
	
	\newpage
	\section{Newton-Wigner-Pryce Operator}\label{NWPsec}
	As already pointed out in the introduction, the appropriate (spatial)-position operator for the Klein-Gordon field is given by the so-called Newton-Wigner-Pryce  operator, \cite{PR48} and \cite{NW49}. For the one-particle case it is given  by the following action, \cite[Chapter 3c, Equation 35]{Sch}
	\begin{equation}\label{NWP}
	(X_{j} \varphi)(\mathbf{p})=-i \left( \frac{p_j}{2\omega_{\mathbf{p}}^2}
	+   \frac{\partial}{\partial p^j } 
	\right)\varphi(\mathbf{p}),
	\end{equation}
	where here we use the covariant representation and normalization, i.e. the observable $X_{j}$ is  represented as a self-adjoint operator  w.r.t. the $\mathscr{H}_1$-scalar product. For the free (relativistic) scalar field the eigenfunctions of the Newton-Wigner-Pryce  operator, which are simultaneously the localized wave functions at time $x_{0}=0$, are given by, \cite[Chapter 3, Equation 38]{Sch}
	\begin{align*} 
	\Psi_{\mathbf{x},0}(\mathbf{p})=(2\pi)^{-n/2}\,e^{-i\mathbf{p} \cdot \mathbf{x}}\,(2\omega_{p})^{1/2}.
	\end{align*}  
	They are of physical importance from the following point of view. Let a particle be in a state $\Phi(\mathbf{p})$ at time $t=0$. Then, the probability amplitude that a position measurement will find the particle at $\mathbf{x}$ is given by, \cite[Chapter 3, Equation 44]{Sch}
	\begin{align*} 
	\langle \overline{\Psi}_{\mathbf{x}}, \Phi\rangle=\int d^n\mu(\mathbf{p})\, \Psi_{\mathbf{x},0}(\mathbf{p}) \Phi(\mathbf{p}).
	\end{align*} 
	In \cite{SS}, \cite{Muc2} and \cite{Muc3} a second-quantized version of this operator was given.  Hence, the $k$-particle generalization of the Newton-Wigner-Pryce operator   can be defined.  Particularly, it can be obtained by the Fourier-transformation of the spatial momentum operator. It is given as a   self-adjoint operator on the domain $\bigotimes_{i=1}^k \mathscr{S}(\mathbb{R}^n)$, with  $\mathscr{S}(\mathbb{R}^n)$ denoting the Schwartz space, as follows, \cite{Muc2, SS}
	\begin{align}\label{cop}
	X_j&=-i\int d^n \mathbf{p}\,  {a}^*(\textbf{p}) \frac{\partial}{\partial p^j}  {a}(\textbf{p})
	,
	\end{align}
	where the  second-quantized  momentum operator, is given as a self-adjoint operator on the domain $\bigotimes_{i=1}^k \mathscr{S}(\mathbb{R}^n)$ and  in  terms of the ladder operators   as
	\begin{align*}
	P_{\mu}&= \int d^n \mathbf{p}\, p_{\mu}\, {a}^*(\mathbf{p}) {a}(\mathbf{p})
	.
	\end{align*}The commutator of  the spatial-momentum operator and the NWP-operator is simply the second-quantized Heisenberg-Weyl relation, see \cite{SS} or \cite{Muc3},
	\begin{align}\label{ccr}
	[X_j,P_k]&=-i\eta_{jk}N,
	\end{align} 
	where $N$ is the particle-number operator represented in Fock-space as
	\begin{align}\label{pn}	N=  \int
	d^n \mathbf{p}\,     {a}^*(\mathbf{p}) {a}(\mathbf{p}).
	\end{align}
	In what follows we give  the form and proof of the coordinate operator in coordinate space.  
	\begin{lemma}\label{nwp}
		The  Newton-Wigner-Pryce operator has the following coordinate space representation,
		\begin{align*}
		X_j&= \int d^n \mathbf{x}\,  x_j \, \tilde{a}^*(\textbf{x}) \tilde{a}(\textbf{x})
		.
		\end{align*}
	\end{lemma}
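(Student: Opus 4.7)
The plan is to start from the second-quantized momentum-space formula \eqref{cop} for $X_j$ and push the derivative onto the plane waves that relate $a(\mathbf{p})$ to $\tilde{a}(\mathbf{x})$ via the inverse Fourier transforms \eqref{inf}. The fundamental input is the elementary identity
\begin{equation*}
-i\,\frac{\partial}{\partial p^{j}}\,e^{ip_{k}y^{k}} \;=\; y_{j}\, e^{ip_{k}y^{k}},
\end{equation*}
which follows from the convention $a\cdot b = a_{0}b^{0}-\vec a\cdot\vec b$, so that $p_{k}y^{k}=-\vec p\cdot\vec y$ and $\partial_{p^{j}}(p_{k}y^{k})=-y^{j}=y_{j}$. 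This identity is precisely what produces the position prefactor $x_{j}$ in the target expression.

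Concretely, I would proceed in four short steps. First, insert $a(\mathbf{p})=(2\pi)^{-n/2}\!\int d^{n}\mathbf{y}\, e^{ip_{k}y^{k}}\tilde{a}(\mathbf{y})$ inside $X_{j}$ and, on the Schwartz-type domain specified in Section \ref{NWPsec}, pass $-i\,\partial/\partial p^{j}$ under the $\mathbf{y}$-integral to hit the plane wave, picking up the factor $y_{j}$ from the identity above. Second, substitute the dual transform $a^{*}(\mathbf{p})=(2\pi)^{-n/2}\!\int d^{n}\mathbf{x}\, e^{-ip_{k}x^{k}}\tilde{a}^{*}(\mathbf{x})$. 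Third, carry out the $\mathbf{p}$-integration, using
\begin{equation*}
(2\pi)^{-n}\!\int d^{n}\mathbf{p}\; e^{ip_{k}(y^{k}-x^{k})} \;=\; \delta^{n}(\vec y-\vec x),
\end{equation*}
which is the same Fourier representation of the delta that underlies the CCR derivation in Definition \ref{sec2.2}. Fourth, collapse the $\mathbf{y}$-integral against the delta, leaving $X_{j}=\int d^{n}\mathbf{x}\, x_{j}\,\tilde{a}^{*}(\mathbf{x})\tilde{a}(\mathbf{x})$ as desired.

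The computation is essentially bookkeeping, so the main obstacle is not analytic but notational: keeping the mostly-minus index conventions straight, in particular tracking the sign that converts $-y^{j}$ into $+y_{j}$, and making sure the overall factor of $-i$ in \eqref{cop} combines correctly with the $-i$ from differentiating the plane wave. Justifying the interchange of the $\mathbf{p}$- and $\mathbf{y}$-integrals and the derivative requires only that one works on the dense Schwartz-type domain $\bigotimes_{i=1}^{k}\mathscr{S}(\mathbb{R}^{n})$ already fixed for $X_{j}$, so that all manipulations are permissible at the level of operator-valued distributions.
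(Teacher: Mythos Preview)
Your proposal is correct and follows essentially the same route as the paper's own proof: insert the Fourier transforms from Definition~\ref{sec2.2} into \eqref{cop}, let the $-i\,\partial/\partial p^{j}$ hit the plane wave to produce the factor $y_{j}$, and then perform the $\mathbf{p}$-integration to obtain $(2\pi)^{n}\delta^{n}(\mathbf{x}-\mathbf{y})$, which collapses the double integral to the stated result. Your additional remarks on the sign bookkeeping and on the Schwartz-domain justification for interchanging integrals and derivatives are more explicit than what the paper records, but the underlying argument is identical.
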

	\begin{proof}
		The proof is done by changing from the momentum basis to the coordinate basis, i.e.
		\begin{align*}
		X_j&=-i\int d^n \mathbf{p}\,  {a}^*(\textbf{p}) \frac{\partial}{\partial p^j}  {a}(\textbf{p})\\&=-i
		(2\pi)^{-n}\int d^n \mathbf{p}\, \int
		d^n \mathbf{x}\, e^{-ip_rx^r} \tilde{a}^*(\mathbf{x})
		\int
		d^n \mathbf{y}\, \left(\frac{\partial}{\partial p^j}e^{ip_ky^k} \right)\tilde{a} (\mathbf{y})
		\\&= 
		(2\pi)^{-n}\,\iint
		\, d^n \mathbf{x}\,  d^n \mathbf{y} \, y_j \underbrace{\left( \int d^n\mathbf{p}\, e^{-ip_k(x-y)^k}\right)}_{(2\pi)^{n}\delta(\mathbf{x}-\mathbf{y})} \tilde{a}^*(\mathbf{x})
		\tilde{a} (\mathbf{y}) ,
		\end{align*}
		where in the last lines we performed the derivative and integrated over the delta distribution. 
	\end{proof}  
	Next, we want to compare the second-quantized operator with the $k$-particle extension in literature and     draw attention to the distinction between   coordinate space and       configuration space (see \cite[Chapter 7]{Sch}). In order to do so let us introduce the configuration space in QFT. For a free scalar field a one particle state is given    by
	\begin{align*}\vert x\rangle=
	\phi^{-}(x)\vert0\rangle &= \int d^n\mu(\mathbf{p}) \,e^{ipx}a_{c}^{*}(\mathbf{p})\vert0\rangle  
	= \int d^n\mu(\mathbf{p}) \,e^{ipx}  \vert \mathbf{p}\rangle,
	\end{align*}  
	where $\phi^{-}(x)$ contains only negative-frequency parts of the Klein-Gordon field at a \textbf{space-time} point $x$. One can think of the configuration  space as the space spanned by vectors, 
	\begin{align*}\vert x_{1},\cdots,x_{k}\rangle=(k!)^{-1/2}
	\phi^{-}(x_{1})\cdots\phi^{-}(x_{k})\vert 0\rangle, 
	\end{align*}
	where the vector $\vert 0\rangle$ is the vacuum. The $k$-particle Fock-space amplitude for a vector $\vert\Psi\rangle$ is given by the scalar product with the configurations space vectors, i.e. 
	\begin{align*}
	\Psi^{(k)}(x_{1},\cdots, x_{k})=(k!)^{-1/2} \langle 0\vert \phi^{+}(x_{1})\cdots\phi^{+}(x_{k}) 
	\vert\Psi  \rangle,
	\end{align*}
	where $\phi^{+}(x)$ contains only positive-frequency parts of the Klein-Gordon field at a  point $x$. From a physical point of view, this is \textbf{not} the probability amplitude for finding   $k$-particle at position $\mathbf{x}_{1} \cdots \mathbf{x}_{k}$ at time $x_{0}={x}_{10}= \cdots{x}_{k0}$. But it is rather the probability amplitude for finding   $k$-particle at time $x_{0}$. Hence, in order to find the quantity,  which gives as the probability amplitude for finding   $k$-particle at positions $\mathbf{x}_{1} \cdots \mathbf{x}_{k}$ at time $x_{0}={x}_{10} \cdots{x}_{k0}$, as we did before for one-particle, we introduce the following operator,  \cite[Chapter 7, Equation 99]{Sch},
	\begin{align}\label{copes}
	\phi_1( {x})&=\int d^n\mu(\mathbf{p}) \overline{\Psi_{\mathbf{x},x_{0}}(\mathbf{p})} a_c(\mathbf{p})\\&=(2\pi)^{-n/2}\,
	\int d^n\mu(\mathbf{p}) e^{i {p} \cdot {x}}\,(2\omega_{p})^{1/2} a_c(\mathbf{p}),
	\end{align}
	and apply $k$ of them on $\vert\Phi^1  \rangle$ and the vacuum as  follows,
	\begin{align*} 
	\Phi^1((x_{0},{x}_1),\cdots,(x_{0},\mathbf{x}_k)) = (k!)^{-1/2} \langle 0\vert \phi_1(x_{0},\mathbf{x}_{1})\cdots\phi_1(x_{0},\mathbf{x}_{k})
	\vert\Phi^1  \rangle.
	\end{align*}
	Although we naively introduced in the preliminaries the Fourier transformation of the creation and annihilation operators, we have the following equality 
	\begin{align}\label{copes1}
	\phi_1({x})\vert_{x_0=0}=\tilde{a}(\mathbf{x}).
	\end{align}
	The  equivalence is easily proven by taking the non-covariant normalization  and the inverse Fourier-transformation into account. The eigenstates of the second-quantized coordinate operator are created by the action of $\tilde{a}(\mathbf{x}_1),\cdots,\tilde{a}(\mathbf{x}_k)$ on the vacuum from the right. Hence, the second-quantized position operator given in Equation (\ref{cop}) agrees with the definition of an operator creating the coordinate space, that is needed for the calculation of probability amplitudes of positions of the particles. 
	\\\\
To understand \textbf{where} (spatially) the particles are, the coordinate space displays more importance, than the configuration space. Particularly, it is essential to obtain a deeper  understanding of  QFT's that are defined on non-commutative spaces. 
	\begin{remark}
		An integral transformation from the configuration space to the coordinate space exists and it is given by,
		\begin{align*}\vert x\rangle=
		\phi^{+}(x)\vert0\rangle&= \int d^n\mu(\mathbf{p}) \,e^{ipx}a_c^{*}(\mathbf{p})\vert0\rangle
		\\&= (2\pi)^{-n/2}\int d^n \mathbf{y}\left( \int d^n \mu(\mathbf{p}) \,\sqrt{2\omega_{\mathbf{p}}}\,
		e^{i\left(\omega_{\mathbf{p}}x^0+p_k(x-y)^{k}\right)}   \right) \tilde{a}^*(\mathbf{y})\vert0\rangle,	 
		\end{align*}  
		where in the last line the Fourier-transformation was used. 
	\end{remark}  
	
	\section{Poincar\'e Group} 
	In this section, we take the representations of the Poincar\'e group in terms of QFT Fock-space operators  and perform a base change to the coordinate space.\\\\

	\begin{lemma}The zero component, i.e. the operator generating time-translations, has the following representation in coordinate space,
		\begin{align*}  
		P_{0}=\int d^n \mathbf{x} \,  
		\tilde{a}^*(\mathbf{x})  \left(\tilde{\omega}\ast
		\tilde{a}\right) (\mathbf{x} ),
		\end{align*}
		where  $\ast$ denotes the convolution and the function $\tilde{\omega}(\mathbf{x})$ is defined as 	
	 \[ \tilde{\omega}(\mathbf{x}):=- 2 { {( {2\pi})^{- \frac{n+1}{2} }}}(\frac{m}{|\mathbf{x} |})^{\frac{n+1}{2}}K_{\frac{n+1}{2}}(m|\mathbf{x} |). \] 
	The momentum operator, that is responsible for spatial translations  takes in   coordinate space the following form, 
		\begin{align*} 
		P_j&=
		i
		\int
		d^n \mathbf{x}\,\tilde{a}^*(\mathbf{x})  
		\frac{\partial}{\partial x^{j}}\tilde{a} (\mathbf{x} ) .
		\end{align*}
	\end{lemma}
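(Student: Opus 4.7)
The plan is to mimic the base change already carried out for the Newton--Wigner--Pryce operator in Lemma \ref{nwp}, applied to the momentum-space representation $P_\mu = \int d^n\mathbf{p}\, p_\mu\, a^*(\mathbf{p}) a(\mathbf{p})$, and to split the argument into the time-translation case $\mu=0$ and the spatial case $\mu=j$, since the former carries the non-polynomial factor $\omega_{\mathbf{p}}=\sqrt{\mathbf{p}^2+m^2}$.

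First I would substitute the inverse Fourier transforms (\ref{inf}) for both $a^*(\mathbf{p})$ and $a(\mathbf{p})$ into $P_0$ and $P_j$. This yields, in each case, an expression of the form
\begin{equation*}
(2\pi)^{-n}\iint d^n\mathbf{x}\,d^n\mathbf{y}\,\tilde{a}^*(\mathbf{x})\tilde{a}(\mathbf{y})\int d^n\mathbf{p}\, p_\mu\, e^{-ip_k(x-y)^k},
\end{equation*}
so everything reduces to identifying the $\mathbf{p}$-integral as a (distributional) kernel in the difference variable $\mathbf{x}-\mathbf{y}$.

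For the spatial case I would imitate the trick used in Lemma \ref{nwp}: write $p_j e^{ip_k y^k}=-i(\partial/\partial y^j)e^{ip_k y^k}$, pull the derivative outside the $\mathbf{p}$-integral to produce $(2\pi)^n(\partial/\partial y^j)\delta^n(\mathbf{x}-\mathbf{y})$, integrate by parts once in $y^j$ onto $\tilde{a}(\mathbf{y})$, and finally collapse the delta distribution. The net effect is the stated expression $P_j=i\int d^n\mathbf{x}\,\tilde{a}^*(\mathbf{x})(\partial/\partial x^j)\tilde{a}(\mathbf{x})$, with the sign bookkeeping arising from the derivative plus one integration by parts.

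For the time-translation case the $\mathbf{p}$-integral is no longer a derivative of a delta function but rather the Fourier transform of $\omega_{\mathbf{p}}$. This is exactly the integral $\tilde{P}^{\lambda}$ of Section 2.2 with $\lambda=1/2$. Using (\ref{f2}) together with $\Gamma(-1/2)=-2\sqrt{\pi}$ and the identity $(2\pi)^{-n}\cdot 2^{3/2}(2\pi)^{n/2}/\Gamma(-1/2)=-2(2\pi)^{-(n+1)/2}$, the kernel becomes precisely $\tilde{\omega}(\mathbf{x}-\mathbf{y})$ as defined in the statement. The resulting double integral is, by definition, $\int d^n\mathbf{x}\,\tilde{a}^*(\mathbf{x})(\tilde{\omega}\ast\tilde{a})(\mathbf{x})$. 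The main obstacle is precisely this constant-tracking: verifying that the prefactors from the inverse Fourier measure, the Gamma function, and the formula (\ref{f2}) combine to give exactly the normalization $-2(2\pi)^{-(n+1)/2}$ appearing in $\tilde{\omega}$. Everything else is a direct transcription of the argument already made for the Newton--Wigner--Pryce operator.
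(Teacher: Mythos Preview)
Your proposal is correct and follows essentially the same route as the paper: substitute the Fourier representations of $a^*(\mathbf{p})$ and $a(\mathbf{p})$, reduce to a $\mathbf{p}$-integral kernel in $\mathbf{x}-\mathbf{y}$, handle $P_j$ by rewriting $p_j$ as a $y$-derivative followed by one integration by parts against a delta function, and handle $P_0$ by invoking formula~(\ref{f2}) at $\lambda=1/2$. Your explicit check of the constants via $\Gamma(-1/2)=-2\sqrt{\pi}$ is a welcome addition that the paper leaves implicit in its citation of \cite{GS1}.
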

	\begin{proof}
		Let us first prove the base change for the spatial part of the momentum operator.  The  base change is straight forward and it is done as in the proof for the coordinate operator (see Lemma \ref{NWP}),
		\begin{align*} 
		P_j&=  \int
		d^n \mathbf{p}\,  p_{j}\, {a}^*(\mathbf{p}) {a}(\mathbf{p})\\&=
		(2\pi)^{-n} \int
		d^n \mathbf{p}\,  p_{j}  \int
		d^n \mathbf{x}\, e^{-ip_lx^l} \tilde{a}^*(\mathbf{x}) \int
		d^n \mathbf{y} \,e^{ip_ky^k} \tilde{a} (\mathbf{y} ) \\&=(2\pi)^{-n} 
		\iint
		d^n \mathbf{x}\,d^n \mathbf{y}\, \left( \int
		d^n \mathbf{p}\,  p_{j} \, e^{-ip_k(x-y)^k} \right)\tilde{a}^*(\mathbf{x})  
		\tilde{a} (\mathbf{y} ) \\&=-(2\pi)^{-n} 
		\iint
		d^n \mathbf{x}\,d^n \mathbf{y}\,  i\frac{\partial}{\partial y^{j}}\left( \int
		d^n \mathbf{p}\,  \, e^{-ip_k(x-y)^k} \right)\tilde{a}^*(\mathbf{x})  
		\tilde{a} (\mathbf{y} ) \\&=  i 
		\iint
		d^n \mathbf{x}\,d^n \mathbf{y}\, \delta^{n}(\mathbf{x}-\mathbf{y})\tilde{a}^*(\mathbf{x})  
		\frac{\partial}{\partial y^{j}}\tilde{a} (\mathbf{y} ) \\&
		= i
		\int
		d^n \mathbf{x}\,\tilde{a}^*(\mathbf{x})  
		\frac{\partial}{\partial x^{j}}\tilde{a} (\mathbf{x} ) ,
		\end{align*}
		where in the last lines we performed  a partial integration  and integrated over the delta function. Next, we transform the zero component into the coordinate space. This transformation requires more work and in particular we use Formula (\ref{f2}), \\
		\begin{align*} 
		P_{0} &=  \int
		d^n \mathbf{p}\,  \omega_{\mathbf{p}} \, {a}^*(\mathbf{p}) {a}(\mathbf{p})\\&=(2\pi)^{-n} 
		\int
		d^n \mathbf{p}\,  \sqrt{|\mathbf{p}|^2+m^2} \int
		d^n \mathbf{x}\, e^{-ip_lx^l} \tilde{a}^*(\mathbf{x}) \int
		d^n \mathbf{y} \,e^{ip_ky^k} \tilde{a} (\mathbf{y} ) \\&=(2\pi)^{-n} 
		\iint
		d^n \mathbf{x}\,d^n \mathbf{y}\, \left( \int
		d^n \mathbf{p}\, \sqrt{|\mathbf{p}|^2+m^2} \, e^{-ip_k(x-y)^k} \right)\tilde{a}^*(\mathbf{x})  
		\tilde{a} (\mathbf{y} ) \\&=- 2 { {( {2\pi})^{- \frac{n+1}{2} }}} 
		\iint
		d^n \mathbf{x}\,d^n \mathbf{y}\, 
		(\frac{m}{|\mathbf{x}-\mathbf{y}|})^{\frac{n+1}{2}}K_{\frac{n+1}{2}}(m|\mathbf{x}-\mathbf{y}|)
		\tilde{a}^*(\mathbf{x})  
		\tilde{a} (\mathbf{y} ),
		\end{align*}
		where the Fourier transformation in the last step can be found in   \cite[Chapter III, Section 2.8]{GS1}.
	\end{proof} 
	An interesting object that appeared naturally in \cite{Muc3} is that of the velocity operator. It can be obtained in two different ways. The first path is guided by intuition. Since the operator must be the second quantization of the spatial velocity for a relativistic particle we have the following expression,
	\begin{align*}
	V_{j}&= \int d^n \mathbf{p}\, \frac{p_j}{\omega_{\textbf{p}}}\, {a}^*(\mathbf{p}) {a}(\mathbf{p})
	.
	\end{align*}
	The second more profound approach is given by the Heisenberg equation of motion,  
	\begin{align}\label{heq}
	[P_0,X_j]=-iV_j.
	\end{align}
	The results of the two paths agree  and it was proven in \cite{Muc3}. Since we have expressions for the momentum and  coordinate operator in the spatial space, we can express the velocity operator in the coordinate space by taking the commutator (\ref{heq}).  Hence, the second path is from a calculative point of view easier and the explicit result is given in the following lemma.
	\begin{lemma}
		The velocity operator expressed in terms of ladder operators of the coordinate space is given by,
		\begin{align}
		V_j= -i\int d^n \mathbf{x} \,  
		\tilde{a}^*(\mathbf{x})  \left(\tilde{\omega}_j\ast
		\tilde{a}\right) (\mathbf{x} ),
		\end{align}
		with vector-valued function $\tilde{\omega}_j$ defined as $\tilde{\omega}_j(\mathbf{x}):=   2{ {( {2\pi})^{- \frac{n+1}{2} }}} (\frac{m}{|\mathbf{x} |})^{\frac{n+1}{2}}K_{\frac{n+1}{2}}(m|\mathbf{x} |)\, x_j$ and where $\ast$ denotes the convolution.  
	\end{lemma}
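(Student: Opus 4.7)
The plan is to follow the hint in the paragraph preceding the lemma and obtain $V_j$ from the Heisenberg equation (\ref{heq}), rewritten as $V_j = i[P_0, X_j]$, rather than attempt a direct Fourier transform of the momentum-space expression $\int d^n\mathbf{p}\,(p_j/\omega_{\mathbf{p}})\,a^*(\mathbf{p})a(\mathbf{p})$. The direct route would force us to Fourier-transform the kernel $p_j/\omega_{\mathbf{p}}$, essentially a differentiated Bessel-function kernel obtained from formula (\ref{f2}) with $\lambda=-1/2$, whereas the commutator route only uses the coordinate-space expressions for $P_0$ and $X_j$ from the preceding lemma and from Lemma \ref{nwp}, together with the coordinate-space CCR $[\tilde{a}(\mathbf{x}),\tilde{a}^*(\mathbf{y})]=\delta^n(\mathbf{x}-\mathbf{y})$.

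Concretely, I would insert
\[
P_0=\iint d^n\mathbf{x}\,d^n\mathbf{y}\,\tilde{\omega}(\mathbf{x}-\mathbf{y})\,\tilde{a}^*(\mathbf{x})\tilde{a}(\mathbf{y}),\qquad
X_j=\int d^n\mathbf{z}\, z_j\,\tilde{a}^*(\mathbf{z})\tilde{a}(\mathbf{z})
\]
into $[P_0, X_j]$ and use the elementary bilinear commutator identity
\[
[\tilde{a}^*(\mathbf{x})\tilde{a}(\mathbf{y}),\tilde{a}^*(\mathbf{z})\tilde{a}(\mathbf{z})]=\delta^n(\mathbf{y}-\mathbf{z})\,\tilde{a}^*(\mathbf{x})\tilde{a}(\mathbf{z})-\delta^n(\mathbf{x}-\mathbf{z})\,\tilde{a}^*(\mathbf{z})\tilde{a}(\mathbf{y}).
\]
The potential quartic contributions drop out because they are symmetric under exchange of the two creation (respectively, annihilation) operators. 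Integrating the two surviving delta functions against the kernels of $P_0$ and $X_j$ collapses the triple integral to
\[
[P_0, X_j]=\iint d^n\mathbf{x}\,d^n\mathbf{y}\,(y_j-x_j)\,\tilde{\omega}(\mathbf{x}-\mathbf{y})\,\tilde{a}^*(\mathbf{x})\tilde{a}(\mathbf{y}).
\]

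The remaining task is to identify the kernel $(y_j-x_j)\,\tilde{\omega}(\mathbf{x}-\mathbf{y})$ with $\tilde{\omega}_j(\mathbf{x}-\mathbf{y})$ up to sign: both share the scalar prefactor $2(2\pi)^{-(n+1)/2}(m/|\mathbf{x}-\mathbf{y}|)^{(n+1)/2}K_{(n+1)/2}(m|\mathbf{x}-\mathbf{y}|)$, and the component $(x-y)_j$ produced by the commutator is exactly the vectorial factor that promotes the scalar $\tilde{\omega}$ to the vector-valued $\tilde{\omega}_j$. After the change of variable $\mathbf{u}=\mathbf{x}-\mathbf{y}$ the double integral is precisely $\int d^n\mathbf{x}\,\tilde{a}^*(\mathbf{x})(\tilde{\omega}_j\ast\tilde{a})(\mathbf{x})$, and combining with the factor $i$ from $V_j = i[P_0, X_j]$ produces the claimed expression.

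The main obstacle is therefore not conceptual but bookkeeping: $\tilde{\omega}$ carries an overall minus sign in its definition while $\tilde{\omega}_j$ does not, and the sign from $y_j - x_j = -(x-y)_j$ as well as the convention for the convolution $(f\ast g)(\mathbf{x})=\int f(\mathbf{x}-\mathbf{y})g(\mathbf{y})\,d^n\mathbf{y}$ must all be tracked consistently. No analytic difficulty arises beyond treating $\tilde{\omega}$ and $\tilde{\omega}_j$ as tempered distributions against which the convolution with smeared $\tilde{a}$ is well defined, which is the same functional framework already used in the previous lemma.
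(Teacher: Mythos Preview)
Your proposal is correct and follows essentially the same route as the paper: both compute $[P_0,X_j]$ directly in coordinate space by inserting the expressions from Lemma~\ref{nwp} and the previous lemma, apply the bilinear commutator identity $[\tilde{a}^*(\mathbf{x})\tilde{a}(\mathbf{y}),\tilde{a}^*(\mathbf{z})\tilde{a}(\mathbf{z})]=\delta^n(\mathbf{y}-\mathbf{z})\tilde{a}^*(\mathbf{x})\tilde{a}(\mathbf{z})-\delta^n(\mathbf{x}-\mathbf{z})\tilde{a}^*(\mathbf{z})\tilde{a}(\mathbf{y})$, collapse the delta functions, and then recognize $-(x-y)_j\,\tilde{\omega}(\mathbf{x}-\mathbf{y})=\tilde{\omega}_j(\mathbf{x}-\mathbf{y})$. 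Your remark that the only genuine labor is sign bookkeeping is exactly right and mirrors the paper's treatment.
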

	\begin{proof}
		The proof is straight-forward and it is done by explicitly calculating the Heisenberg equation of motion in coordinate space, i.e. 
		\begin{align*}
		[P_0,X_j]&=\iint d^n \mathbf{x} \, d^n \mathbf{z} \,  z_j\, [
		\tilde{a}^*(\mathbf{x}) \left(\tilde{\omega}\ast
		\tilde{a}\right) (\mathbf{x} ),\tilde{a}^*(\mathbf{z}) \tilde{a} (\mathbf{z}) ]\\&=
		\iiint d^n \mathbf{x} \, d^n \mathbf{y}\,d^n \mathbf{z}\,z_j  \, \tilde{\omega}(x-y)\underbrace{[
			\tilde{a}^*(\mathbf{x})  
			\tilde{a} (\mathbf{y} ),\tilde{a}^*(\mathbf{z}) \tilde{a} (\mathbf{z}) ]}_{-\delta(\mathbf{x}-\mathbf{z}) \tilde{a}^*(\mathbf{z})\tilde{a} (\mathbf{y})+\delta(\mathbf{y}-\mathbf{z}) \tilde{a}^*(\mathbf{x})\tilde{a} (\mathbf{z})}\\&=
		-\iint d^n \mathbf{x} \, d^n \mathbf{y}\,  \,(x-y)_j  \, \tilde{\omega}(x-y)  \tilde{a}^*(\mathbf{x}) \tilde{a}(\mathbf{y}).
		\end{align*}
	\end{proof}$\,$\\
	The velocity operator expressed in the coordinate space is useful for the boost operators. In particular, the boosts of the restricted Lorentz group have the velocity operator explicitly in their respective representations. \\\\
	Next, we turn to the generators of the proper orthochronous Lorentz group $\mathscr{L}^{\uparrow}_{+}$, consisting of boosts  and rotations which are given in the momentum space as, \cite[Equation
	3.54]{IZ}, \cite{SS}
	\begin{align}\label{lbcaop1}
	M_{j0}&= i\int  d^n\mathbf{p}\,{a}^{*}(\textbf{p}) 
	\left(  \frac{p_j}{2\omega_{\textbf{p}}}-\omega_{\textbf{p}}\frac{\partial}{\partial p^j } \right) a(\textbf{p}),
	\\ \label{lbcaop2}
	M_{ik}&=i \int d^n\mathbf{p}\,  {a}^{*}(\textbf{p}) 
	\left(p_i \frac{\partial}{\partial p^k }-p_k\frac{\partial}{\partial p^i }\right)
	a(\textbf{p}).
	\end{align}	The next theorem displays the importance of the NWP-operator. In particular, the generators of   boost and rotations can be represented  by   second quantized (denoted by $d\Gamma(\cdot)$ see \cite[Chapter X.7]{RS2}) products of the momentum and NWP-operator.
	\begin{theorem}For the massive scalar field,  generators of the proper orthochronous Lorentz-group $\mathscr{L}^{\uparrow}_{+}$  can be represented in terms  of products of the NWP-operator and the relativistic momentum operator as follows,
		\begin{align}\label{p1}
		M_{0j}=\frac{1}{2}\big( d\Gamma(X_jP_0)+d\Gamma(P_0X_j)\big)  ,
		\qquad 
		M_{ik} =   d\Gamma(X_iP_k)-d\Gamma(X_{k}P_i).
		\end{align}
	\end{theorem}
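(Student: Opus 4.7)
The plan is to reduce the operator identities (\ref{p1}) to identities at the one-particle level and then use linearity of second quantization. By (\ref{cop}), $X_j$ acts on a one-particle wave function as $-i\,\partial/\partial p^j$ (in the non-covariant normalization used throughout Section~3), while $P_\mu$ acts as multiplication by $p_\mu$; in the notation of the theorem, $X_j = d\Gamma(-i\,\partial_{p^j})$ and $P_\mu = d\Gamma(p_\mu)$. The products $X_j P_\mu$ and $P_\mu X_j$ inside $d\Gamma(\cdot)$ are to be read as compositions of these one-particle differential operators, and the claim is then that the one-particle kernels on the right-hand side of (\ref{p1}) coincide with the kernels appearing under the integrals in (\ref{lbcaop1})--(\ref{lbcaop2}).

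For the spatial generators, a direct one-particle computation yields
\[
(X_i P_k - X_k P_i)\varphi(\mathbf{p}) = -i\,\partial_{p^i}\!\bigl(p_k\,\varphi\bigr) + i\,\partial_{p^k}\!\bigl(p_i\,\varphi\bigr).
\]
The $\delta$-terms produced by differentiating $p_k$ and $p_i$ are symmetric in $(i,k)$ and cancel after antisymmetrization, leaving $i\bigl(p_i\,\partial_{p^k}-p_k\,\partial_{p^i}\bigr)\varphi$, which is exactly the kernel of $M_{ik}$ in (\ref{lbcaop2}). No symmetrization of the ordering is needed here precisely because the commutator contribution $-i\eta_{ik}$ from (\ref{ccr}) is killed by the antisymmetric pairing. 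For the boosts, the analogous calculation gives
\[
X_j P_0\,\varphi(\mathbf{p}) = -i\,\frac{p^j}{\omega_{\mathbf{p}}}\,\varphi - i\,\omega_{\mathbf{p}}\,\partial_{p^j}\varphi, \qquad P_0 X_j\,\varphi(\mathbf{p}) = -i\,\omega_{\mathbf{p}}\,\partial_{p^j}\varphi,
\]
using $\partial_{p^j}\omega_{\mathbf{p}} = p^j/\omega_{\mathbf{p}}$. Averaging and employing the convention $p^j = -p_j$ for spatial components turns the half-sum into $i\bigl(p_j/(2\omega_{\mathbf{p}}) - \omega_{\mathbf{p}}\,\partial_{p^j}\bigr)\varphi$, reproducing the integrand of (\ref{lbcaop1}); second-quantizing both sides gives the boost identity in (\ref{p1}).

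The main obstacle is conceptual rather than technical. Because $[X_j,P_0]\neq 0$ at the one-particle level, neither $d\Gamma(X_jP_0)$ nor $d\Gamma(P_0X_j)$ is formally self-adjoint on its own, and only their symmetrization corresponds to a genuine generator of the proper orthochronous Lorentz group; this is what forces the explicit $\tfrac{1}{2}$-symmetrization in the boost formula while the rotation formula can get away with a plain antisymmetric difference. A secondary but nontrivial point is bookkeeping of the index conventions (covariant versus contravariant spatial components, and the sign flip $p_j=-p^j$ entering through the Minkowski signature fixed in the Convention), which has to be kept straight throughout the one-particle computations so that the $p_j/(2\omega_{\mathbf{p}})$ term in $M_{j0}$ emerges with the correct sign from $\partial_{p^j}\omega_{\mathbf{p}}$.
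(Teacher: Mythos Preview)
Your proposal is correct and follows essentially the same route as the paper: both reduce the identities to the one-particle level and match the resulting kernels against (\ref{lbcaop1})--(\ref{lbcaop2}). The only cosmetic difference is that the paper rewrites the symmetrized boost as $\tfrac{1}{2}\,d\Gamma([X_j,P_0])+d\Gamma(P_0X_j)$ and invokes the Heisenberg equation (\ref{heq}) to identify the commutator piece with $\tfrac{i}{2}\,d\Gamma(V_j)$, whereas you carry out the product-rule differentiation $\partial_{p^j}\omega_{\mathbf{p}}=p^j/\omega_{\mathbf{p}}$ directly; the underlying computation is the same.
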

	\begin{proof}
		We start the proof for boosts where we have,
		\begin{align*}
		M_{0j}&=\frac{1}{2}\big( d\Gamma(X_jP_0)+d\Gamma(P_0X_j)\big) =\frac{1}{2}  d\Gamma([X_j,P_0])+d\Gamma(P_0X_j) 
		\\  &=\frac{i}{2}  d\Gamma(V_j)+d\Gamma(P_0X_j) ,
		\end{align*}
		where in the last lines we used the Equation of motion given by Formula (\ref{heq}).  The proof for rotations is obvious and hence it is omitted. 
	\end{proof}
The former Theorem conveys the fact that we can define the generator of the Lorentz-group by using the coordinate operator and the momentum operator.  Therefore, the canonical commutation relations with  addition of the zero component of   momentum  can be used to define the group of relativity. This fact, is, in our opinion, an additional  argument for the physical sense of the NWP-operator. \\\\ Moreover,   the representation of   Lorentz generators by using the NWP and momentum operator induces more transparency into
the non-covariant behavior of the   NWP-operator with regards  to boosts. The following calculations clarify the former statement, 
	\begin{align*}
	[M_{0j},P_0]&= \frac{1}{2}[\big( d\Gamma(X_jP_0)+d\Gamma(P_0X_j)\big),d\Gamma (P_0)]\\&=
	\frac{1}{2} \big( d\Gamma([X_j,P_0]P_0)+d\Gamma(P_0[X_j,P_0])\big) 
	\\&=i \, d\Gamma(V_jP_0)=i \, d\Gamma(P_j)=iP_j,
	\end{align*}
	where in the last lines we used  $[d\Gamma(A),d\Gamma(B)]=d\Gamma([A,B])$ 
	and the Heisenberg-equation (see Equation \ref{heq}). 
	Moreover, we used the representation of the velocity operator  as $V_j=d\Gamma(P_0^{-1 }P_j)$. Next,  the commutator of boosts with the spatial momentum is calculated,
	\begin{align*}
	[M_{0j},P_k]&= \frac{1}{2}[\big( d\Gamma(X_jP_0)+d\Gamma(P_0X_j)\big),d\Gamma (P_k)]\\&=
	\frac{1}{2} \big( d\Gamma([X_j,P_k]P_0)+d\Gamma(P_0[X_j,P_k])\big) 
	\\&=-i\eta_{jk}P_0,
	\end{align*}
	where in the last lines we used the second-quantization of the unit operator to be the particle number operator, i.e. $d\Gamma(\mathbb{I}_{\mathscr{H}})=N$ and the explicit canonical commutation relations, see Equation (\ref{ccr}).   Next, we calculate the commutator which  essentially answers the question of covariance w.r.t. the NWP-operator, 
	\begin{align*}
	[M_{0j},X_k]&= \frac{1}{2}[\big( d\Gamma(X_jP_0)+d\Gamma(P_0X_j)\big),d\Gamma (X_k)]\\&=
	\frac{1}{2} \big( d\Gamma(X_j[P_0,X_k] )+d\Gamma( [P_0,X_k]X_j)\big)  \\&=
	\frac{i}{2} \big( d\Gamma(X_jV_k )+d\Gamma( V_kX_j)\big) . 
	\end{align*} The former commutator has no resemblance to  a covariant commutator-like object as $[M_{0j},X_k]= -i\eta_{jk}X_0$. However,  the boost operator can be  translated in time by using the generators of the translation group,
	\begin{align*}
	e^{ix^{0}P_{0}}M_{j0} e^{-ix^{0}P_{0}}&=M_{j0}+ix^{0}[P_{0},M_{j0}]+\frac{i^2}{2!}(x^{0})^2
	\underbrace{ [P_{0},[P_{0},M_{j0}]]}_{=0}+\cdots
	\\&= M_{j0}+x^{0} P_{j}, 
	\end{align*}
with $x^0\in\R$	and where the Backer-Campbell-Hausdorff formula and the explicit Poincar\'e algebra was used. Therefore, the time-dependent boost-operator can be represented as a  symmetric product of the second quantized coordinate, the time and the momentum operators,
	\begin{align*}
	M_{0j}&=\frac{1}{2}\big( d\Gamma(X_jP_0)+d\Gamma(P_0X_j)
	+x_0 d\Gamma(P_0)+ d\Gamma(P_0) x_0
	\big)  \\&=\frac{1}{2}\big( d\Gamma(X_jP_0)+d\Gamma(P_0X_j)	\big)  
	+x_0 d\Gamma(P_0).
	\end{align*}
By using the time-translated boost operator we obtain for the commutator with the NWP-operator ,
	\begin{align*}
	[ M_{j0}+x^{0} P_{j},X_k]&=	\frac{i}{2} \big( d\Gamma(X_jV_k )+d\Gamma( V_kX_j)\big) +x^{0} [ P_{j},X_k]
	\\&= i\eta_{jk}\, x^{0} N+	\frac{i}{2} \big( d\Gamma(X_jV_k )+d\Gamma( V_kX_j)\big),
	\end{align*}
	which has a resemblance to  a covariant operator.  From the view point of representing  the Lorentz generators  by the  use of the NWP- and the momentum operator,  the non-covariance issue becomes clear on a more profound level.  Commutator relations for the rotations are   obvious and hence we shall omit them. However, from the representation of these operators, it becomes  obvious why both the NWP and the momentum operator transform covariantly w.r.t rotations. In particular,  generators of rotations have the well-known complementarity between momentum and position.
	
	\begin{lemma}\label{lg}
		Generators of the  proper orthochronous Lorentz group $\mathscr{L}^{\uparrow}_{+}$ expressed in the terms of ladder operators of the free massive scalar field are represented in the coordinate space as,
		\begin{align*}
		M_{j0}=\frac{1}{2}  	\iint d^n \mathbf{x} \, d^n \mathbf{y}\, (x+y)_j\, \tilde{\omega}(\mathbf{x}-\mathbf{y})  \tilde{a}^*(\mathbf{x}) \tilde{a}(\mathbf{y})  .
		\end{align*}
		The operator of rotations takes  the familiar   form in the coordinate space,
		\begin{align*}
		M_{ik}=	 i \int
		d^n \mathbf{x}  \,
		\tilde{a}^*(\mathbf{x})
		\left(x_i \frac{\partial}{\partial x^k }-x_k\frac{\partial}{\partial x^i }\right) \tilde{a} (\mathbf{x}).
		\end{align*}
	\end{lemma}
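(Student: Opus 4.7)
The plan is to use the preceding theorem, which expresses the Lorentz generators as second quantisations of products of the single-particle operators $X_j$, $P_0$, $P_k$, and to combine it with the coordinate-space representations of those operators already obtained in Lemma~\ref{nwp} and in the lemma giving $P_\mu$ at the start of this section. Reading off from those Fock-space formulas, the single-particle operators act on a position-space wavefunction $\varphi(\mathbf{x})$ by $(X_j\varphi)(\mathbf{x})=x_j\varphi(\mathbf{x})$, $(P_0\varphi)(\mathbf{x})=(\tilde{\omega}\ast\varphi)(\mathbf{x})$, and $(P_k\varphi)(\mathbf{x})=i\partial_{x^k}\varphi(\mathbf{x})$, the sign in $P_k$ being dictated by the paper's convention $p_kx^k=-\vec p\cdot\vec x$. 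I will also use the standard identity $d\Gamma(A)=\int d^n\mathbf{x}\,\tilde{a}^*(\mathbf{x})(A\tilde{a})(\mathbf{x})$, or equivalently $\iint\mathcal{A}(\mathbf{x},\mathbf{y})\tilde{a}^*(\mathbf{x})\tilde{a}(\mathbf{y})\,d^n\mathbf{x}\,d^n\mathbf{y}$ when $A$ has integral kernel $\mathcal{A}$.

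For the boost, composing the single-particle actions yields kernels $x_j\tilde{\omega}(\mathbf{x}-\mathbf{y})$ for $X_jP_0$ and $y_j\tilde{\omega}(\mathbf{x}-\mathbf{y})$ for $P_0X_j$; averaging gives $\tfrac12(x+y)_j\tilde{\omega}(\mathbf{x}-\mathbf{y})$, whose second quantisation is the claimed expression for $M_{j0}$. For rotations, the single-particle operator $X_iP_k-X_kP_i$ acts as the differential operator $i(x_i\partial_{x^k}-x_k\partial_{x^i})$, and second quantising produces the stated formula directly.

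No serious obstacle is anticipated; the calculation is essentially bookkeeping, with the only subtle point being the consistent extraction of the single-particle position-space actions of $P_0$ and $P_k$ from their Fock-space forms under the paper's sign conventions. As a cross-check one could proceed in an entirely self-contained way by substituting the Fourier inversions (\ref{inf}) into the momentum-space expressions (\ref{lbcaop1}) and (\ref{lbcaop2}) and computing as in the proof of Lemma~\ref{nwp}; for $M_{ik}$ this mirrors the $P_j$ calculation, while for $M_{j0}$ the Bessel identity (\ref{f2}) together with a partial integration in $p$ handling the $\omega_\mathbf{p}\partial_{p^j}$ term (plus the rewriting $p_j/\omega_\mathbf{p}=-\partial_{p^j}\omega_\mathbf{p}$) again reproduces the $\tfrac12(x+y)_j\tilde{\omega}(\mathbf{x}-\mathbf{y})$ kernel.
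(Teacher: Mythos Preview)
Your argument is correct, but it takes a different route from the paper's own proof. The paper does not invoke the preceding theorem at all; instead it substitutes the inverse Fourier transforms (\ref{inf}) directly into the momentum-space expressions (\ref{lbcaop1}) and (\ref{lbcaop2}) and computes term by term. For $M_{ik}$ this is a repetition of the $P_j$ calculation (partial integration, delta function). For $M_{j0}$ the paper treats the two pieces of (\ref{lbcaop1}) separately: the $p_j/(2\omega_{\mathbf p})$ part is identified with $\tfrac{i}{2}V_j$, already expressed in coordinate space in the velocity lemma, while the $\omega_{\mathbf p}\partial_{p^j}$ part is Fourier-transformed using Integral~(\ref{f2}) to produce the kernel $y_j\,\tilde\omega(\mathbf{x}-\mathbf{y})$; the sum of the two kernels then collapses to $\tfrac12(x+y)_j\,\tilde\omega(\mathbf{x}-\mathbf{y})$.

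Your approach is cleaner in that it exploits the structural content of the theorem: once the single-particle kernels of $X_jP_0$ and $P_0X_j$ are read off as $x_j\tilde\omega(\mathbf{x}-\mathbf{y})$ and $y_j\tilde\omega(\mathbf{x}-\mathbf{y})$, the symmetrisation is immediate and no separate appeal to the velocity operator or to Integral~(\ref{f2}) is needed. The paper's approach, on the other hand, is self-contained in the sense that it does not logically depend on the theorem, and it makes explicit where the Bessel integral enters. The cross-check you sketch in your last paragraph is essentially the paper's computation.
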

	\begin{proof}
		We start by calculating the base change of  the rotation  generators of the Lorentz group  $\mathscr{L}^{\uparrow}_{+}$,
		\begin{align*}
		M_{ik}&=i \int d^n\mathbf{p}\,  {a}^{*}(\textbf{p}) 
		\left(p_i \frac{\partial}{\partial p^k }-p_k\frac{\partial}{\partial p^i }\right)
		a(\textbf{p})\\&=i(2\pi)^{-n}
		\int d^n\mathbf{p}\,  \int
		d^n \mathbf{x}\, e^{-ip_lx^l} \tilde{a}^*(\mathbf{x}) 
		\left(p_i \frac{\partial}{\partial p^k }-p_k\frac{\partial}{\partial p^i }\right)
		\int
		d^n \mathbf{y}\, e^{ip_ry^r} \tilde{a} (\mathbf{y})
		\\&=-i(2\pi)^{-n}
		\int d^n\mathbf{p}\,  \int
		d^n \mathbf{x}\, e^{-ip_lx^l} \tilde{a}^*(\mathbf{x})
		\int
		d^n \mathbf{y}\,\left(y_i \frac{\partial}{\partial y^k }-y_k\frac{\partial}{\partial y^i }\right) e^{ ip_ry^r} \tilde{a} (\mathbf{y})
		\\&=i(2\pi)^{-n}\iint
		d^n \mathbf{x}\, d^n \mathbf{y}\, \left({\int d^n\mathbf{p}\, e^{-ip_l(x-y)^l}} \right)
		\tilde{a}^*(\mathbf{x})
		\left(y_i \frac{\partial}{\partial y^k }-y_k\frac{\partial}{\partial y^i }\right) \tilde{a} (\mathbf{y}) \\&=i \int
		d^n \mathbf{x}  \,
		\tilde{a}^*(\mathbf{x})
		\left(x_i \frac{\partial}{\partial x^k }-x_k\frac{\partial}{\partial x^i }\right) \tilde{a} (\mathbf{x}),
		\end{align*}
		where in the last lines we performed a partial integration and integrated over a delta function. 
		Next, we turn to the Lorentz boosts. The first term is simply the velocity operator times $\frac{i}{2}$. Hence, we focus here only on the second part,
		\begin{align*}
		&- i\int  d^n\mathbf{p}\,{a}^{*}(\textbf{p}) \, \omega_{\textbf{p}}\frac{\partial}{\partial p^j}  a(\textbf{p})\\&=
		- i\int  d^n\mathbf{p}\,\int
		d^n \mathbf{x}\, e^{-ip_rx^r} \,\tilde{a}^*(\mathbf{x}) \, \omega_{\textbf{p}}\frac{\partial}{\partial p^j}  \int
		d^n \mathbf{y}\, e^{ ip_ky^k} \tilde{a}(\mathbf{y}) 
		\\&= (2\pi)^{-n}\iint	d^n \mathbf{x}\,	d^n \mathbf{y}
		\left( \int  d^n\mathbf{p}\, \sqrt{|\mathbf{p}|^2+m^2} \, e^{-ip_k(x-y)^k}    \right)\,y_j\,\tilde{a}^*(\mathbf{x}) \tilde{a}(\mathbf{y}) 
		\\&= -2 { {( {2\pi})^{- \frac{n+1}{2} }}} 
		\iint
		d^n \mathbf{x}\,d^n \mathbf{y}\, y_j\,
		(\frac{m}{|\mathbf{x}-\mathbf{y}|})^{\frac{n+1}{2}}K_{\frac{n+1}{2}}(m|\mathbf{x}-\mathbf{y}|)
		\tilde{a}^*(\mathbf{x})  
		\tilde{a} (\mathbf{y} )
		\\&= 
		\iint
		d^n \mathbf{x}\,d^n \mathbf{y}\, y_j\,\tilde{\omega}(\mathbf{x}-\mathbf{y})
		\tilde{a}^*(\mathbf{x})  
		\tilde{a} (\mathbf{y} ),
		\end{align*}
		where in the last lines we performed the differentiation and used Integral (\ref{f2}).
	\end{proof}
	Now the result for the rotations is not  surprising and is well-known from quantum mechanics, i.e. for one-particle. The interesting and  unknown result is that of the Lorentz-boosts. It is a representation of the boost operator in terms of the spatial coordinate space.  From the form of the boost generator it also becomes more clear why a covariant coordinate operator, i.e. a zero  component (besides issues with Pauli's theorem) $X_{0}$ cannot exist. This can be seen by taking the commutator of the Lorentz boost and the coordinate operator, 
	\begin{align*}
	[M_{j0},X_k]&=	\frac{1}{2}  	\iiint d^n \mathbf{x} \, d^n \mathbf{y}\, d^n \mathbf{z} \,(x+y)_j\,z_k\, \tilde{\omega}(\mathbf{x}-\mathbf{y})  [\tilde{a}^*(\mathbf{x}) \tilde{a}(\mathbf{y}) , \tilde{a}^*(\mathbf{z}) \tilde{a}(\mathbf{z})]\\&=-
	\frac{1}{2}  	\iint d^n \mathbf{x} \, d^n \mathbf{y}\,  \,(x+y)_j\,(x-y)_k\, \tilde{\omega}(\mathbf{x}-\mathbf{y})   \tilde{a}^*(\mathbf{x}) \tilde{a}(\mathbf{y})\\&  \neq
	-i\delta_{jk}X_{0}.
	\end{align*}
	Moreover, the complementarity in an operational sense, i.e. by replacing   multiplication operators with derivatives and vice versa, is broken by the boost operators. This can be easily seen due to their explicit form in the coordinate space. 
	\\\\
	In Section \ref{NWPsec} we gave  the explicit  operator (see Equation (\ref{copes})) to calculate the probability amplitude for  finding $k$-particle at positions $\mathbf{x}_{1} \cdots \mathbf{x}_{k}$ at time $x_{0}={x}_{10}= \cdots{x}_{k0}$. By setting the time component $x_{0}=0$ equal to zero we were able to match this expression with our ladder operator in coordinate space. Hence, we could (or should) have calculated the more general and explicit expressions for the operators by keeping the time component. To resolve this issue, time-translations can be performed by using the unitary adjoint action of the zero component of the momentum. This fact is composed as a result in the form of a proposition.
	\begin{proposition}\label{sf}
		The operator $\phi_1$ given as,
		\begin{align*} 
		\phi_1( {x}) =(2\pi)^{-n/2}\,
		\int d^n \mathbf{p}\, e^{i {p} \cdot {x}}\,a(\mathbf{p}),
		\end{align*}
		can be expressed by using time-translations of the Fourier-transformed annihilation operator $\tilde{a}$
		as follows
		\begin{align*} 
		\phi_1( {x}) =	\phi_1( x^{0}, \mathbf{x})  = e^{ix^{0}P_{0}} \tilde{a}(	 \mathbf{x})e^{-ix^{0}P_{0}}, \qquad x^0\in\mathbb{R}.
		\end{align*}
		
	\end{proposition}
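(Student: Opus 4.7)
The plan is to prove the proposition by showing that both sides satisfy the same first-order differential equation in $x^{0}$ with the same initial value at $x^{0}=0$. The initial-value condition is already supplied by Equation~(\ref{copes1}), which states $\phi_{1}(x)|_{x^{0}=0}=\tilde{a}(\mathbf{x})$, and trivially $e^{i x^{0} P_{0}}\tilde{a}(\mathbf{x}) e^{-ix^{0}P_{0}}|_{x^{0}=0}=\tilde{a}(\mathbf{x})$. So the whole task reduces to matching the generators of the $x^{0}$-evolution on both sides.

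First I would compute the fundamental commutator $[P_{0},a(\mathbf{p})]$ directly in momentum space. Writing $P_{0}=\int d^{n}\mathbf{q}\,\omega_{\mathbf{q}}\,a^{*}(\mathbf{q})a(\mathbf{q})$ and using the noncovariant CCR $[a(\mathbf{p}),a^{*}(\mathbf{q})]=\delta^{n}(\mathbf{p}-\mathbf{q})$ from Definition~\ref{sec2.2}, a one-line calculation yields
\begin{equation*}
[P_{0},a(\mathbf{p})]=-\omega_{\mathbf{p}}\,a(\mathbf{p}).
\end{equation*}
Hence $a(\mathbf{p})$ is an eigenvector of $\mathrm{ad}(P_{0})$, and an elementary ODE/BCH argument (differentiating $f(t)=e^{itP_{0}}a(\mathbf{p})e^{-itP_{0}}$, obtaining $f'(t)=-i\omega_{\mathbf{p}}f(t)$, solving with $f(0)=a(\mathbf{p})$) produces the pointwise adjoint action
\begin{equation*}
e^{ix^{0}P_{0}}\,a(\mathbf{p})\,e^{-ix^{0}P_{0}}=e^{-i\omega_{\mathbf{p}}x^{0}}a(\mathbf{p}).
\end{equation*}

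Next I would insert this into the Fourier representation of $\tilde{a}(\mathbf{x})$ from Equation~(\ref{inf}). Since the adjoint action is linear and commutes with the integration over $\mathbf{p}$, one gets
\begin{equation*}
e^{ix^{0}P_{0}}\tilde{a}(\mathbf{x})e^{-ix^{0}P_{0}} = (2\pi)^{-n/2}\int d^{n}\mathbf{p}\,e^{-ip_{k}x^{k}}e^{-i\omega_{\mathbf{p}}x^{0}}a(\mathbf{p}).
\end{equation*}
On the mass shell $p_{0}=\omega_{\mathbf{p}}$, the exponent reassembles into the Minkowski product $\pm i\,p\cdot x$ (using the convention $p\cdot x=p_{0}x^{0}+p_{k}x^{k}$ stated in the introduction), and the right-hand side becomes exactly the integral expression defining $\phi_{1}(x)$.

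The main obstacle is purely one of careful sign and convention bookkeeping: the paper uses $\mathbf{p}\cdot\mathbf{x}$ (Euclidean) in Section~\ref{NWPsec}, $p_{k}x^{k}$ (Minkowski, with a sign) in the Fourier definitions, and $p\cdot x$ in the statement of the proposition, so I would take care to recombine the two exponential factors $e^{-ip_{k}x^{k}}$ and $e^{-i\omega_{\mathbf{p}}x^{0}}$ into the form appearing in the definition of $\phi_{1}$ \emph{before} claiming equality. Once that identification is made, the equality follows. No approximation or limit arguments are required beyond the elementary adjoint-action calculation, since everything is interpreted at the level of operator-valued distributions already established in the preliminaries.
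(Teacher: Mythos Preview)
Your proof is correct and follows essentially the same route as the paper: express $\tilde{a}(\mathbf{x})$ via the inverse Fourier transform (Equation~(\ref{inf})), apply the adjoint action $e^{ix^{0}P_{0}}a(\mathbf{p})e^{-ix^{0}P_{0}}=e^{-i\omega_{\mathbf{p}}x^{0}}a(\mathbf{p})$ under the integral, and identify the result with the defining expression for $\phi_{1}(x)$. The only difference is that the paper simply cites this adjoint action from \cite[Chapter~7, Equation~61]{Sch}, whereas you derive it from the commutator $[P_{0},a(\mathbf{p})]=-\omega_{\mathbf{p}}a(\mathbf{p})$ and a one-line BCH/ODE argument; your initial-value framing is thus slightly more detailed but not a genuinely different strategy.
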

	\begin{proof}
		The proof is performed by using the inverse Fourier-transformation  of $\tilde{a}$, i.e. 
		\begin{align*} 
		e^{ix^{0}P_{0}} \tilde{a}(	 \mathbf{x})e^{-ix^{0}P_{0}}&=(2\pi)^{-n/2} \int
		d^n \mathbf{p}\, e^{-ip_kx^k}e^{ix^{0}P_{0}}{a}(\mathbf{p})e^{-iy^{0}P_{0}}\\&=(2\pi)^{-n/2} \int
		d^n \mathbf{p}\,  e^{-ip_kx^k}\,e^{-ip_{0}x^{0} }{a}(\mathbf{p}) , 
		\end{align*}
		where in the last lines we used the adjoint action of the translation group on the annihilation operator, see \cite[Chapter 7, Equation 61]{Sch}. This agrees with the  expression given in Equation (\ref{copes}). The explicit expression of this operator in  terms of the coordinate space ladder operators is given in the next section.
	\end{proof} Hence, instead of using the time-independent transformations using the ladder operators in coordinate space, we could have used the product of $\phi_1(x^0,\mathbf{x})$ and $\phi_1(x^0,\mathbf{y})$ \textbf{or} we can simply take our obtained expressions and perform a time-translation. However, from the Baker-Campbell-Hausdorff formula and the particular algebra of the Poincar\'e group it follows that  the momentum operator and the spatial rotations are invariant under time-translations. Hence, the only term that changes under time-translation is the boost operator and this is  easily calculated,
	\begin{align*}
	e^{ix^{0}P_{0}}M_{j0} e^{-ix^{0}P_{0}}  
 = M_{j0}+x^{0} P_{j},
	\end{align*}
using the Baker-Campbell-Hausdorff formula and the    Poincar\'e algebra.	
	\\\\
	Although, neither the coordinate operator nor the respective eigenstates are Lorentz-covariant, the Poincar\'e operators that are translated in this work obey the covariant transformation property. In particular, the Lorentz-covariance does not depend on the  specific base that is chosen. This statement is proven  by re-transforming the ladder operators from coordinate space into momentum space, and taking the appropriate transformation into account. This is obvious and it simply follows from the fact that that the operators are covariant, invariant of the representation they are given in. Hence, even though we use non-covariant eigenstates, generators of the Poincar\'e group written in coordinate space  respect the relativistic covariance.
	\section{Poincar\'e Transformations of the Coordinate Space}
	In this section we investigate   Poincar\'e transformations, for the massive case,  of the Fourier-transformed ladder operators. In particular, we calculate  how the operator $\phi_1$, which is the object used to calculate probability amplitudes, transforms under space-time translations and pure rotations. The motivation herein is to give the proper transformation behavior of  probability amplitudes under changes of frame. In particular, given two frames related by space-time translations or pure rotations, both must be able to calculate and compare the probability amplitude w.r.t. the frame change. We purposely neglect boosts since the spatial coordinate space is not Lorentz-covariant and hence the question of how boosts act is not well-defined from the very beginning.  \\\\In the forthcoming calculations, we	work in the physical relevant dimension $d=4$. Moreover, we define the unitary operator that generates transformations  of the orthochronous proper Poincar\'e group, $\mathscr{P}^{\uparrow}_{+}=\mathscr{L}^{\uparrow}_{+}\ltimes\mathbb{R}^4$ as $U(y,\Lambda)$ transforming the creation and annihilation operators as, \cite[Chapter 7]{Sch},
	\begin{align}\label{traf}
	U(y,\Lambda) a(\mathbf{p})U(y,\Lambda)^{-1}&= \sqrt{\frac{	\omega_{\Lambda \mathbf{p}}}{	\omega_{ \mathbf{p}}}}e^{-i(\Lambda p)_{\mu}y^{\mu}}{a}(\Lambda\mathbf{p}) ,\qquad (y,\Lambda) \in \mathscr{P}^{\uparrow}_{+},\\\label{traf1}
	U(y,\Lambda) a^{*}(\mathbf{p})U(y,\Lambda)^{-1}&= \sqrt{\frac{	\omega_{\Lambda \mathbf{p}}}{	\omega_{ \mathbf{p}}}}e^{ i(\Lambda p)_{\mu}y^{\mu}}{a}^{*}(\Lambda\mathbf{p}) ,\qquad (y,\Lambda) \in \mathscr{P}^{\uparrow}_{+}
	.
	\end{align} 
	To simplify calculations with regard  to transformations of the operator $\phi_1$, we first give the adjoint actions on the Fourier-transformed operators and afterwards perform a time-translation, see 
	Proposition \ref{sf}. 		By using the former results, we calculate transformations in the momentum space and perform an inverse Fourier-transformation in order to get the respective coordinate space transformations.

	\begin{lemma}\label{l61}
		The Fourier-transformed annihilation  operator $\tilde{a}$  transforms under the adjoint action of the translation group as follows,    
		\begin{align*}
		U(y ,\mathbb{I}) \tilde{a} (	 \mathbf{x}) U( y ,\mathbb{I}) ^{-1}& =\frac{iy^{0} }{2\pi^2}  \left(
		\frac{	 m^2K_{2}(m\sqrt{\vert \mathbf{x+y} \vert^{2}-(y^{0})^2})}{\vert \mathbf{x+y} \vert^{2}-(y^{0})^2}\right)\ast\tilde{a}(\mathbf{x+y}) ,
		\end{align*}
		where $y\in\mathbb{R}^4$ and $\ast$ denotes the convolution.
	\end{lemma}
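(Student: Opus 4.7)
The plan is straightforward bookkeeping once the right object is identified. First I would insert the inverse Fourier transform (\ref{inf}) into $U(y,\mathbb{I})\tilde{a}(\mathbf{x})U(y,\mathbb{I})^{-1}$ and push the unitary inside the momentum integral; by (\ref{traf}) with $\Lambda=\mathbb{I}$ the prefactor $\sqrt{\omega_{\Lambda\mathbf{p}}/\omega_{\mathbf{p}}}$ equals unity, so the adjoint action multiplies $a(\mathbf{p})$ by $e^{-ip_{\mu}y^{\mu}}=e^{-i\omega_{\mathbf{p}}y^{0}+i\vec{p}\cdot\vec{y}}$ (using the sign convention of the Conventions box). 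Combining exponentials absorbs $\vec{y}$ into the spatial plane wave, leaving
\begin{equation*}
U(y,\mathbb{I})\tilde{a}(\mathbf{x})U(y,\mathbb{I})^{-1}=(2\pi)^{-3/2}\int d^{3}\mathbf{p}\,e^{i\vec{p}\cdot(\vec{x}+\vec{y})}\,e^{-i\omega_{\mathbf{p}}y^{0}}\,a(\mathbf{p}).
\end{equation*}

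I would next reinterpret this as a spatial convolution by introducing the kernel
\[
F(y^{0},\vec{z}):=(2\pi)^{-3}\int d^{3}\mathbf{p}\,e^{-i\omega_{\mathbf{p}}y^{0}+i\vec{p}\cdot\vec{z}}.
\]
A short Fubini argument, combined with the inverse Fourier transform (\ref{inf}) that defines $\tilde{a}$, shows that the preceding expression is $(F(y^{0},\cdot)\ast\tilde{a})(\mathbf{x}+\mathbf{y})$. The task is thereby reduced to evaluating $F$ in closed form.

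For that evaluation, the key observation is $\omega_{\mathbf{p}}e^{-i\omega_{\mathbf{p}}y^{0}}=i\partial_{y^{0}}e^{-i\omega_{\mathbf{p}}y^{0}}$, which pulls a time derivative outside:
\[
F(y^{0},\vec{z})=2i\,\partial_{y^{0}}\!\int\frac{d^{3}\mathbf{p}}{(2\pi)^{3}\,2\omega_{\mathbf{p}}}e^{-i\omega_{\mathbf{p}}y^{0}+i\vec{p}\cdot\vec{z}}.
\]
The remaining integral is the Wightman two-point function $D^{+}(y^{0},\vec{z})$, which is Lorentz invariant. At equal times it is handled directly by (\ref{f2}) with $\lambda=-1/2$, producing $D^{+}(0,\vec{z})=\frac{m K_{1}(m|\vec{z}|)}{4\pi^{2}|\vec{z}|}$; Lorentz invariance extends this on the spacelike region to $D^{+}(y^{0},\vec{z})=\frac{m K_{1}(mr)}{4\pi^{2}r}$ with $r:=\sqrt{|\vec{z}|^{2}-(y^{0})^{2}}$.

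The last step, and the calculational heart of the proof, is the $y^{0}$-derivative of this expression. Combining the Bessel recursion $K_{\nu+1}(z)-K_{\nu-1}(z)=(2\nu/z)K_{\nu}(z)$ with $K_{\nu}'(z)=-\tfrac{1}{2}(K_{\nu-1}(z)+K_{\nu+1}(z))$ collapses the cross terms into the compact identity $\frac{d}{dr}[r^{-1}K_{1}(mr)]=-m r^{-1}K_{2}(mr)$; coupled with $\partial_{y^{0}}r=-y^{0}/r$ this yields $F(y^{0},\vec{z})=\frac{iy^{0}m^{2}K_{2}(mr)}{2\pi^{2}r^{2}}$, which is precisely the stated kernel. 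The main obstacle I expect is the Bessel-function manipulation that condenses $K_{0}$ and $K_{1}$ contributions into a single $K_{2}$; a secondary subtlety is the timelike region $|\vec{z}|^{2}<(y^{0})^{2}$, which is handled by treating $F$ as a tempered distribution and invoking the analytic extension of $K_{\nu}$ across the light cone.
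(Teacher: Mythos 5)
Your proposal is correct and follows essentially the same route as the paper: both reduce the adjoint action to the momentum-space multiplier $e^{-i\omega_{\mathbf{p}}y^{0}+i\vec p\cdot\vec y}$, absorb the spatial shift into the plane wave, and recognize the remaining kernel $\int d^{3}\mathbf{p}\,e^{-i\omega_{\mathbf{p}}y^{0}+i\vec p\cdot\vec z}$ as the object to evaluate in closed form. The only difference is that the paper splits the translation into a spatial part followed by a time translation and cites Peskin--Schroeder and Gradshteyn--Ryzhik for the resulting Bessel integral, whereas you evaluate it yourself via $2i\partial_{y^{0}}$ acting on the Wightman function, Equation (\ref{f2}) at $\lambda=-1/2$, Lorentz invariance, and the $K_{\nu}$ recurrences -- a self-contained substitute that reproduces the same kernel $\frac{iy^{0}m^{2}K_{2}(mr)}{2\pi^{2}r^{2}}$.
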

	\begin{proof}
		Since, the momentum operators commute we can first calculate the spatial transformations, which are easier, and then the time-translations, i.e. 
		\begin{align*} 
		U( {y},\mathbb{I})  \tilde{a}(	 \mathbf{x})  U( {y},\mathbb{I}) ^{-1} =
		U(y^{0},\mathbb{I})\,
		U(\mathbf{y},\mathbb{I})  \,\tilde{a}(	 \mathbf{x}) \, U(\mathbf{y},\mathbb{I}) ^{-1}  U(y^{0},\mathbb{I})^{-1}.
		\end{align*}
		Therefore, 	  we first investigate the following expression,
		\begin{align*} U(\mathbf{y},\mathbb{I})  \tilde{a}(	 \mathbf{x})  U(\mathbf{y},\mathbb{I}) ^{-1}&= e^{iy^kP_{k}} \tilde{a}(	 \mathbf{x})e^{-iy^kP_{k}}
		\\&=
		(2\pi)^{3/2} \int
		d^{3} \mathbf{p}\, e^{-ip_kx^k}    U(\mathbf{y},\mathbb{I}) {a}(\mathbf{p})  U(\mathbf{y},\mathbb{I}) ^{-1}\\
		&=(2\pi)^{-3/2} \int
		d^{3} \mathbf{p}\, e^{-ip_k(x+y)^k}     {a}(\mathbf{p})= \tilde{a}(	 \mathbf{x}+\mathbf{y}),
		\end{align*}
		where in the last lines we used the inverse Fourier-transformation (see Equation \ref{inf}) and the well-known action of translations in the momentum space, see \cite[Chapter 7, Equation 61]{Sch}. 
		Next, we calculate the action of time-translations, which is
		given by the following expression,
		\begin{align*} U(y^{0},\mathbb{I}) \tilde{a}(	 \mathbf{x})  U(y^{0},\mathbb{I})^{-1} &=e^{iy^{0}P_{0}} \tilde{a}(	 \mathbf{x})e^{-iy^{0}P_{0}}\\
		&=(2\pi)^{-3/2} \int
		d^{3} \mathbf{p}\, e^{-ip_kx^k}   e^{-iy^{0}  \omega_{\mathbf{p}}}   {a}(\mathbf{p})  \\
		&=(2\pi)^{-3 }\int
		d^{3} \mathbf{z}    {\left(\int
			d^{3} \mathbf{p}\, e^{-ip_k(x-z)^k} e^{-i  \omega_{\mathbf{p}}y^{0}}\right)}_{
		}  \tilde{a}(\mathbf{z})\\&=\frac{i	y^{0}}{2\pi^2}
		\int
		d^{3} \mathbf{z} \left(
		\frac{ m^2K_{2}(m\sqrt{\vert \mathbf{x}-\mathbf{z}\vert^{2}-(y^{0})^2})}{\vert \mathbf{x}-\mathbf{z}\vert^{2}-(y^{0})^2}\right)\tilde{a}(\mathbf{z}),
		\end{align*}
		where in the former equation we used \cite[Chapter 2]{PS} and \cite[Equation 3.914]{RG} to solve the integral. Equivalent considerations can be done for the Fourier transformed creation operator $\tilde{a}^*$.	 By using both transformations the proof is completed. 
	\end{proof}
	The translation acting on the coordinate space in spatial direction was as expected a translation in the spatial space as well. However, the difference in the spatial space becomes visible when we perform a translation  in time. Since, there is no time component the time-translation transforms, and therefore acts, on  the spatial space in a more complex manner.  It is interesting since this is a physical   important expression. Essentially, it tells us how the spatial coordinate space (which is made of $k$-ladder operators, see Equation \ref{copes1}), transforms under a time-translation. Moreover, with regards to the time-translated coordinate annihilation operator $\tilde{a}$, which corresponds to $\phi_1$, an explicit expression in coordinate space was given. The next object of interest is the Lorentz-transformation of   the Fourier-transformed ladder operators. 
	\begin{lemma}\label{l62}
		Let $\Lambda_R$ denote the matrices of the Lorentz group which represent pure rotations. They are given by 
		\[\Lambda_R=\left(
		\begin{matrix} 
		1 & 0\\
		0 & R	 
		\end{matrix}\right), \qquad R\in SO(3),
		\]
		where $SO(3)$ is the group of real, orthogonal, $3\times 3$ matrices  with determinant one. Then, the Fourier-transformed ladder  operator  $\tilde{a}$  transforms under pure rotations as follows  
		\begin{align*}
		U(0,\Lambda_R)  \tilde{a}(	 \mathbf{x}) U(0,\Lambda_R) ^{-1} =
		\tilde{a} (	R \mathbf{ x }).
		\end{align*} 
	\end{lemma}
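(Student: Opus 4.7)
The plan is to start from the inverse Fourier representation
\begin{align*}
\tilde{a}(\mathbf{x}) = (2\pi)^{-3/2}\int d^{3} \mathbf{p}\, e^{-ip_k x^k}\, a(\mathbf{p})
\end{align*}
given in Equation (\ref{inf}) and to pull the adjoint action of $U(0,\Lambda_R)$ inside the integral. This step is justified by the unitarity of $U(0,\Lambda_R)$, reading the identity as one between operator-valued distributions in the spirit of the preceding lemma.

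Next, I would apply the transformation rule (\ref{traf}) with $y=0$ and $\Lambda=\Lambda_R$. The decisive simplification comes from the block-diagonal form of $\Lambda_R$: it leaves the zero component untouched, so $(\Lambda_R p)^{0} = p^{0}$, and consequently $\omega_{\Lambda_R \mathbf{p}} = \omega_{R\mathbf{p}} = \omega_{\mathbf{p}}$ because rotations preserve the Euclidean length of $\mathbf{p}$. The prefactor $\sqrt{\omega_{\Lambda \mathbf{p}}/\omega_{\mathbf{p}}}$ therefore collapses to unity, leaving
\begin{align*}
U(0,\Lambda_R)\tilde{a}(\mathbf{x})U(0,\Lambda_R)^{-1} = (2\pi)^{-3/2}\int d^{3} \mathbf{p}\, e^{-ip_k x^k}\, a(R\mathbf{p}).
\end{align*}

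The last step is a change of integration variable $\mathbf{q}=R\mathbf{p}$. Since $\det R = 1$, the Lebesgue measure is invariant, and the orthogonality $R^{\top}R = \mathbb{I}$ rewrites the exponent via $p_k x^k = (R^{-1}\mathbf{q})_k x^k = q_k (R\mathbf{x})^k$. Comparing the resulting expression with Equation (\ref{inf}) then identifies the integral with $\tilde{a}(R\mathbf{x})$, which is the claim. An entirely analogous computation produces the corresponding rule for $\tilde{a}^{*}$.

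Because the proof reduces to the $SO(3)$-invariance of the Lebesgue measure and of $\omega_{\mathbf{p}}$, there is no serious analytical obstacle; the only point that requires any care is the bookkeeping around the convention $p_k x^k$ introduced in the Conventions block, so that the change of variables deposits $R$ (and not $R^{-1}$) on the argument $\mathbf{x}$ of $\tilde{a}$. This is in sharp contrast to the time-translation case of Lemma \ref{l61}, where the nontrivial dependence of $\omega_{\mathbf{p}}$ on $|\mathbf{p}|$ forces the Bessel-type kernel to appear.
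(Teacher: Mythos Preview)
Your proof is correct and follows essentially the same route as the paper's own argument: express $\tilde{a}(\mathbf{x})$ via the inverse Fourier transform, apply the transformation rule (\ref{traf}) so that the prefactor $\sqrt{\omega_{R\mathbf{p}}/\omega_{\mathbf{p}}}$ drops out by rotational invariance of $\omega_{\mathbf{p}}$, and then change variables using the orthogonality of $R$. Your write-up is in fact slightly more explicit than the paper about the measure invariance and the handling of the exponent under the substitution, but the logic is identical.
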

	\begin{proof}
		The action of pure spatial rotations  on the  the Fourier-transformed annihilation operator is calculated by using Equation (\ref{traf1}), 
		\begin{align*}
		U(0,\Lambda_R)  \tilde{a} (	 \mathbf{x})  	U(0,\Lambda_R)  ^{-1} &=
		(2\pi)^{-3/2} \int
		d^{3} \mathbf{p}\, e^{-ip_kx^k}   U(0,\Lambda_R){a}(\mathbf{p}) U(0,\Lambda_R)^{-1}\\&=
		(2\pi)^{-3/2} \int
		d^{3} \mathbf{p}\, e^{-ip_kx^k}   \sqrt{\frac{	\omega_{	R\mathbf{p}}}{	\omega_{ \mathbf{p}}}}{a}(	R\mathbf{p})  
		\\&=
		(2\pi)^{-3/2} \int
		d^{3} \mathbf{p}\, e^{-i p_k(	R x)^k}    {a}( \mathbf{p}),
		\end{align*}
		in the last lines we used the transformation behavior of the non-covariant momentum ladder operators  (see \cite[Chapter 7, Equation 57]{Sch}) and the orthogonality of $R$ for pure spatial rotations.  
		The proof for the Fourier-transformed creation operator $\tilde{a}^{*}$ can be done equivalently as for the annihilation operator. However, the transformations for $\tilde{a}^{*}$ can be as well obtained by taking the adjoint of the Fourier-transformed annihilation operator. 
	\end{proof}  
	Next, by using the former lemmas we  calculate the transformational behavior of $\phi_1(x)$ under the whole  group of translations and pure rotations. 
	\begin{theorem}
		The  operator	  $\phi_1(x)$ transforms in a covariant manner under space-\textbf{time} translations and pure rotations, i.e. 
			\begin{align*}&
			U(y,\Lambda_R) \phi_1(x)U(y,\Lambda_R)^{-1}
			= \phi_1(x_0+y_0,	R \mathbf{  x+y})   , \qquad (y,\Lambda_{R}) \in \mathscr{P}^{\uparrow}_{+}.
			\end{align*} 
		 Moreover, the explicit transformation   of the Fourier-transformed operator $\tilde{a}$ under the action of the subgroup $(y,\Lambda_{R}) \in \mathscr{P}^{\uparrow}_{+}$ is in coordinate space given as, 
		\begin{align*}&
		U(y,\Lambda_R)\tilde{a}(\mathbf{x})U(y,\Lambda_R)^{-1}
		= 
		\frac{iy^{0} }{2\pi^2}  \left(
		\frac{	 m^2 K_{2}(m\sqrt{\vert \mathbf{x+y} \vert^{2}-(y^{0})^2})}{\vert \mathbf{x+y} \vert^{2}-(y^{0})^2}\right)\ast\tilde{a}(\mathbf{x+y}).
		\end{align*}
	\end{theorem}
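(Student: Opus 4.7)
The plan is to exploit the semidirect product structure $\mathscr{P}^{\uparrow}_{+}=\mathscr{L}^{\uparrow}_{+}\ltimes\mathbb{R}^4$ by decomposing $U(y,\Lambda_R)=U(y,\mathbb{I})\,U(0,\Lambda_R)$ and then applying the preceding lemmas one at a time. The two formulas have different natural entry points: the $\tilde{a}$-formula reduces immediately to Lemmas \ref{l61} and \ref{l62}, while the $\phi_1$-formula requires an additional use of Proposition \ref{sf} to trade a time translation for an evolution operator.

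For the transformation of $\tilde{a}(\mathbf{x})$, I would first apply $U(0,\Lambda_R)(\cdot)U(0,\Lambda_R)^{-1}$ to $\tilde{a}(\mathbf{x})$; by Lemma \ref{l62} this produces $\tilde{a}(R\mathbf{x})$. I would then apply $U(y,\mathbb{I})(\cdot)U(y,\mathbb{I})^{-1}$ to $\tilde{a}(R\mathbf{x})$ and quote Lemma \ref{l61} verbatim, simply with $\mathbf{x}$ replaced by $R\mathbf{x}$, to obtain the Bessel-function kernel convolved with $\tilde{a}(R\mathbf{x}+\mathbf{y})$. (Strictly, the formula in the theorem appears to suppress the factor $R$ inside the argument and the norm; the proof produces the rotated version, from which the stated pure-translation form is the special case $R=\mathbb{I}$.)

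For the covariance of $\phi_1(x)$, I would begin from the momentum-space expression of $\phi_1$ displayed in Proposition \ref{sf} and act first with $U(0,\Lambda_R)$. Using (\ref{traf}) and the fact that $\omega_{R\mathbf{p}}=\omega_{\mathbf{p}}$ (so the square-root factor drops), I would change variables $\mathbf{p}\mapsto R^{-1}\mathbf{p}$; orthogonality of $R$ makes $d^3\mathbf{p}$ invariant and turns $\mathbf{p}\cdot\mathbf{x}$ into $\mathbf{p}\cdot R\mathbf{x}$, yielding $U(0,\Lambda_R)\phi_1(x)U(0,\Lambda_R)^{-1}=\phi_1(x^0,R\mathbf{x})$. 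I would then apply $U(y,\mathbb{I})$ and rewrite the result using Proposition \ref{sf} as $e^{ix^0P_0}\tilde{a}(R\mathbf{x})e^{-ix^0P_0}$. Because $[P_0,P_k]=0$, I can split $U(y,\mathbb{I})=e^{iy^0P_0}e^{iy^kP_k}$, push the spatial translation through to act on $\tilde{a}(R\mathbf{x})$ via Lemma \ref{l61} (spatial part, producing $\tilde{a}(R\mathbf{x}+\mathbf{y})$), and combine $e^{iy^0P_0}e^{ix^0P_0}=e^{i(x^0+y^0)P_0}$. One more invocation of Proposition \ref{sf} identifies the resulting conjugation as $\phi_1(x^0+y^0,R\mathbf{x}+\mathbf{y})$.

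The steps are essentially algebraic manipulations on already-established results, so no genuine obstacle arises. The one point that needs care is the bookkeeping of the two different time exponentials that appear when combining Proposition \ref{sf} with the time-translation piece of $U(y,\mathbb{I})$: one must note that they share the generator $P_0$ and commute with the spatial translation, so they add in the exponent rather than forcing a Baker--Campbell--Hausdorff expansion. The second minor nuisance is the apparent absence of $R$ in the theorem's displayed $\tilde{a}$-formula, which the proof naturally corrects by producing the rotated argument, recovering the stated expression in the case $\Lambda_R=\mathbb{I}$.
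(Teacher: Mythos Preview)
Your proposal is correct and follows essentially the same approach as the paper: decompose $U(y,\Lambda_R)=U(y,\mathbb{I})\,U(0,\Lambda_R)$, use the commutation of time-translations with spatial translations and rotations, and apply Proposition~\ref{sf} together with Lemmas~\ref{l61} and~\ref{l62}. The only cosmetic difference is that the paper converts $\phi_1(x)$ to $U(x^0,\mathbb{I})\tilde{a}(\mathbf{x})U(x^0,\mathbb{I})^{-1}$ at the very start and then works exclusively with $\tilde{a}$, whereas you perform the rotation on $\phi_1$ directly in momentum space before switching to the $\tilde{a}$-picture; your observation that the second displayed formula should carry $R\mathbf{x}$ rather than $\mathbf{x}$ is also apt.
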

	\begin{proof}
		By using Proposition \ref{sf} we can write the operator $\phi_1(x)$ in terms of the Fourier-transformed operator, i.e. 
		\begin{align*}
		U(y,\Lambda_R) \phi_1(x)U(y,\Lambda_R)^{-1}  =
		U(y,\Lambda_R) U(x^0,\mathbb{I})  \tilde{a} (	 \mathbf{x})  U(x^0,\mathbb{I})^{-1}U(y,\Lambda_R)^{-1},
		\end{align*}
		since time-translations commute with space-time translations and pure rotations and since we have $U(y,\Lambda_R)=	U(y,\mathbb{I})	U(0,\Lambda_R)$,  they allow us to  rewrite the former expression as 
		\begin{align*}&
		U(x^0,\mathbb{I}) \,
		U(y,\Lambda_R) \,\tilde{a} (	 \mathbf{x}) \, U(y,\Lambda_R)^{-1}U(x^0,\mathbb{I})^{-1}\\ &=  U(x^0,\mathbb{I}) \,
		U(y,\mathbb{I})	\,U(0,\Lambda_R)\, \tilde{a} (	 \mathbf{x}) \,U(0,\Lambda_R)^{-1} U(y,\mathbb{I})^{-1}	U(x^0,\mathbb{I})^{-1}\\ &=  U(x^0,\mathbb{I})\, 
		U(y,\mathbb{I}) \,\tilde{a} (R \mathbf{ x}) \, U(y,\mathbb{I})^{-1}	U(x^0,\mathbb{I})^{-1}
		\\ &=    U(x^0+y^0,\mathbb{I})\, 
		\,\tilde{a} (R\mathbf{	  x+y}) \, 	U(x^0+y^0,\mathbb{I})^{-1}
		\\ &=   \phi_1(x_0+y_0, R\mathbf{ x+y})   ,
		\end{align*}
		where in the last lines we used Proposition \ref{sf} and the Lemmas \ref{l61} and \ref{l62}, which give the transformational behavior of  the coordinate space operator $\tilde{a}$.
	\end{proof}
	
	\section{Conclusion and Outlook}
	In this paper we performed a base change of the  Poincar\'e group for a massive scalar field, into the coordinate space. This was done by  expressing the creation and annihilation operators, via Fourier-transformation, in terms of the coordinate basis.
	\\\\
	One particular interesting, but expected, aspect of our results was the representation of the spatial momentum and the Lorentzian infinitesimal generators of rotations. The one-particle spatial momentum operator and the rotation operators had the same form as in standard quantum mechanics. This is due to the fact that the energy condition, i.e. the choice of $\omega_{\mathbf{p}}$, does not explicitly enter the expressions for these particular observables.  Nevertheless, for  time-translations and boosts  we notice the difference and in particular the complementarity between the $\mathbf{x}$-space and the momentum space is broken.  
	\\\\
	Concerning the specific actions of the Poincar\'e group on this space we obtained   important results. The action of the spatial translations and rotations on the coordinate space were as expected. However, the interesting result occurred when the time-translation was involved. In a sense, these transformations are a way to fully  understand what happens in the spatial coordinate space,  when transformations involving time come into play. The restriction to this particular space is a by-product of the mass shell condition  that restricted a $d$-dimensional momentum space to a $(d-1)$-dimensional subspace. Hence, by using the transformations in  coordinate space one can  calculate the probability of finding  particles in certain  spatial-positions   at time $x_{0}$ after an explicit   time-translation was performed.
	\\\\ 
	In this work, we translated the boost operators into the coordinate space. Moreover, the adjoint action of the Fourier-transformed operators w.r.t. the boosts was not calculated. Generally, such a calculation is  not well-posed since   boosts mix time and space in   a non-trivial manner. And as hence, the eigenstates of the Newton-Wigner operator are not Lorentz-covariant the calculation of a boost on these particular spaces is from a physical point of view considered problematic.
	\\\\
	We performed all base changes by using  the massive scalar field. However, a possible extension to this framework can be done with regard  to other fields, as for example considering the conformal group expressed in terms of the massless scalar field. This is work in progress. 
	
	\section*{Aknowledgments}
	The author would like to thank Prof. K. Sibold for initiating   deep conceptual questions of this work.  Moreover, we would like to thank Prof. K. Sibold, Prof. Yuri Bonder and S. Pottel for many fruitful discussions during different stages of this paper. The linguistical  corrections of Dr. Z. Much are thankfully acknowledged.

	\bibliographystyle{alpha}
	\bibliography{allliterature1}

\newcommand{\etalchar}[1]{$^{#1}$}
\begin{thebibliography}{GRJ{\etalchar{+}}07}

\bibitem[BR96]{BR}
O.~Bratteli and D.W. Robinson.
\newblock {Operator algebras and quantum statistical mechanics. Vol. 2:
  Equilibrium states. Models in quantum statistical mechanics}.
\newblock 1996.

\bibitem[GRJ{\etalchar{+}}07]{RG}
Izrail~Solomonovich Gradshteyn, Iosif~Moiseevich Ryzhik, Alan Jeffrey, Daniel
  Zwillinger, and Inc. Scripta~Technica.
\newblock {\em Table of integrals, series, and products}.
\newblock Elsevier, Amsterdam, Boston, Paris, et al., 2007.

\bibitem[GS64]{GS1}
I.M. Gel'fand and G.E. Shilov.
\newblock {\em Generalized functions}.
\newblock Generalized Functions. Academic Press, 1964.

\bibitem[IZ80]{IZ}
C.~Itzykson and J.B. Zuber.
\newblock {Quantum field theory}.
\newblock 1980.

\bibitem[Jor80]{J80}
T.~F. Jordan.
\newblock {Simple derivation of the Newton-Wigner position operator}.
\newblock {\em J. Math. Phys.}, 21:2028, 1980.

\bibitem[Muc13]{Muc2}
Albert Much.
\newblock {Quantum Spacetime from QM and QFT}.
\newblock {\em PhD thesis}, 2013.

\bibitem[Muc15]{Muc3}
Albert Much.
\newblock {Relativistic Corrections to the Moyal-Weyl Spacetime}.
\newblock {\em Journal of Mathematical Physics}, 56(2):022301, February 2015.

\bibitem[NW49]{NW49}
T.~D. Newton and E.~P. Wigner.
\newblock {Localized states for elementary systems}.
\newblock {\em Rev. Mod. Phys.}, 21:400--406, 1949.

\bibitem[Pry48]{PR48}
M.~H.~L. Pryce.
\newblock {The mass-centre in the restricted theory of relativity and its
  connexion with the quantum theory of elementary particles}.
\newblock {\em Proc. Royal Soc. London, Ser. A}, 195:62, 1948.

\bibitem[PS95]{PS}
Michael~Edward Peskin and Daniel~V. Schroeder.
\newblock {\em An introduction to quantum field theory}.
\newblock Advanced book program. Westview Press Reading (Mass.), Boulder
  (Colo.), 1995.
\newblock Autre tirage : 1997.

\bibitem[RS75]{RS2}
M.~Reed and B.~Simon.
\newblock {\em {Methods of Modern Mathematical Physics. 2. Fourier Analysis,
  Selfadjointness}}.
\newblock {Gulf Professional Publishing}, 1975.

\bibitem[Sch61]{Sch}
Silvan~S. Schweber.
\newblock {An introduction to relativistic quantum field theory}.
\newblock 1961.

\bibitem[SS09]{SS}
Klaus Sibold and Gautier Solard.
\newblock {Conjugate variables in quantum field theory: The basic case}.
\newblock {\em Phys.Rev.}, D80:124041, 2009.

\end{thebibliography}

\end{document}